\newtheorem{theorem}{Theorem}
\newtheorem{definition}{Definition}
\newtheorem{claim}{Claim}
\newtheorem{observation}{Observation}
\newtheorem{lemma}[theorem]{Lemma}
\newtheorem{corollary}[theorem]{Corollary}
\newcommand{\qed}{$\square$}
\newcommand{\MSVV}{MSVV}
\newcommand{\eat}[1]{}
\newcommand{\greedy}{{\sc Greedy}}
\newcommand{\ranking}{{\sc Ranking}}
\newcommand{\pgreedy}{{\sc Perturbed-Greedy}}
\newenvironment{proof}{\noindent{\em Proof:}}{\hfill \qed \medskip}
\title{Online Vertex-Weighted Bipartite Matching and\\ Single-bid Budgeted Allocations}
\author{Gagan Aggarwal\thanks{Google Inc., Mountain View. Email: \texttt{gagana@google.com}} \and Gagan Goel\thanks{Georgia Institute of Technology. Email: \texttt{gagang@cc.gatech.edu}} \and Chinmay Karande\thanks{Georgia Institute of Technology. Email: \texttt{ckarande@cc.gatech.edu}. Work done while visiting Google.} \and Aranyak Mehta\thanks{Google Inc., Mountain View. Email:  \texttt{aranyak@google.com}}}
\date{}
\begin{document}
\maketitle
\thispagestyle{empty}
\begin{abstract}
We study the following vertex-weighted online bipartite matching
problem: $G(U, V, E)$ is a bipartite graph. The vertices in $U$ have
weights and are known ahead of time, while the vertices in $V$ arrive
online in an arbitrary order and have to be matched upon arrival. The
goal is to maximize the sum of weights of the matched vertices in
$U$. When all the weights are equal, this reduces to the classic
\emph{online bipartite matching} problem for which Karp, Vazirani and
Vazirani gave an optimal $\left(1-\frac{1}{e}\right)$-competitive
algorithm in their seminal work~\cite{KVV90}.

Our main result is an optimal $\left(1-\frac{1}{e}\right)$-competitive
randomized algorithm for general vertex weights. We use \emph{random
  perturbations} of weights by appropriately chosen multiplicative
factors. Our solution constitutes the first known generalization of
the algorithm in~\cite{KVV90} in this model and
provides new insights into the role of randomization in online
allocation problems. It also effectively solves the problem of
\emph{online budgeted allocations} \cite{MSVV05} in the case when an
agent makes the same bid for any desired item, even if the bid is
comparable to his budget - complementing the results of \cite{MSVV05,
  BJN07} which apply when the bids are much smaller than the budgets.
\end{abstract}

\newpage
\setcounter{page}{1}

\section{Introduction}
\label{section:intro}
Online bipartite matching is a fundamental problem with numerous
applications such as matching candidates to jobs, ads to advertisers,
or boys to girls. A canonical result in online bipartite matching is
due to Karp, Vazirani and Vazirani~\cite{KVV90}, who gave an optimal
online algorithm for the unweighted case to maximize the {\em size} of
the matching. In their model, we are given a
bipartite graph $G(U, V, E)$. The vertices in $U$ are known ahead of
time, while the vertices in $V$ arrive one at a time online in an
arbitrary order. When a vertex in $V$ arrives, the edges incident to
it are revealed and it can be matched to a neighboring vertex in $U$
that has not already been matched. A match once made cannot be
revoked. The goal is to maximize the number of matched vertices. 

However, in many real world scenarios, the value received from
matching a vertex might be different for different vertices: (1)
Advertisers in online display ad-campaigns are willing to pay a fixed
amount every time their graphic ad is shown on a website. By
specifying their targeting criteria, they can choose the set of
websites they are interested in. Each impression of an ad can be
thought of as matching the impression to the advertiser, collecting
revenue equal to the advertiser's bid. (2) Consider the sale of an
inventory of items such as cars. Buyers arrive in an online manner
looking to purchase one out of a specified set of items they are
interested in. The sale of an item generates revenue equal to the
price of the item. The goal in both these cases is to maximize the
total revenue.  With this background, we consider the following
problem:\\

\noindent\textsc{\textbf{Online vertex-weighted bipartite matching:}} The input
instance is a bipartite graph\\ $G(U, V, E, \{b_u\}_{u\in U})$, with
the vertices in $U$ and their weights $b_u$ known ahead of
time. Vertices in $V$ arrive one at a time, online, revealing their
incident edges. An arriving vertex can be matched to an unmatched
neighbor upon arrival. Matches once made cannot be revoked later and a vertex
left unmatched upon arrival cannot be matched later. The goal is to
maximize the sum of the weights of the matched vertices in $U$.\\

\noindent {\bf Connection to the online budgeted allocation problem:}
Apart from being a natural generalization of the online bipartite
matching problem, our vertex-weighted matching problem is closely
related to an important class of online problems. Mehta \textit{et al}
\cite{MSVV05} considered the following online version of maximum
budgeted allocation problem \cite{GKP01, LLN01} to model sponsored
search auctions: We have $n$ agents and $m$ items. Each agent $i$
specifies a monetary budget $B_i$ and a bid $b_{ij}$ for each item
$j$. Items arrive online, and must be immediately allocated to an
agent. If a set $S$ of items is allocated to agent $i$, then the agent
pays the minimum of $B_i$ and $\sum_{j\in S}b_{ij}$. The objective is
to maximize the total revenue of the algorithm. An important and
unsolved restricted case of this problem is when all the non-zero bids
of an agent are equal, \textit{i.e.} $b_{ij} = b_i$ or 0 for all
$j$. This case reduces to our vertex-weighted matching problem (For a
proof, refer to Appendix \ref{app7}).

For the general online budgeted allocation problem, no factor better
than $\frac{1}{2}$ (achieved by a simple deterministic greedy
algorithm \cite{LLN01}) is yet known. The best known lower bound
stands at $1-\frac{1}{e}$ due to the hardness result in \cite{KVV90}
for the case when all bids and budgets are equal to 1 - which is
equivalent to the unweighted online matching problem. The \emph{small
  bids} case - where $b_{ij} \ll B_i$ for all $i$ and $j$ - was solved
by \cite{MSVV05, BJN07} achieving the optimal $1-\frac{1}{e}$
deterministic competitive ratio. It was believed that handling
\emph{large bids} requires the use of randomization, as in
\cite{KVV90}. In particular, many attempts \cite{KV07, BM08, GM08} had
been made to simplify the analysis of the randomized algorithm in
\cite{KVV90}, but no generalization had been achieved.

Our solution to the vertex-weighted matching problem is a significant
step in this direction. Our algorithm generalizes that of \cite{KVV90}
and provides new insights into the role of randomization in these
solutions, as outlined in Section \ref{section:overview}. Finally,
our algorithm has interesting connections to the solution of
\cite{MSVV05} for the \emph{small bids} case - despite the fact that
the vertex-weighted matching problem is neither harder nor easier than
the \emph{small bids} case. This strongly suggests a possible unified
approach to the unrestricted online budgeted allocation problem. See
Section \ref{sec:implications} for details.

\subsection{Overview of the Result}
\label{section:overview}

\noindent{\bf Solution to the unweighted case}: To describe our result, it is instructive to start at the unweighted case ($b_u = 1$ for all $u\in U$) and study its solution by \cite{KVV90}. Two natural approaches that match each arriving $v \in V$ to the an unmatched neighbor in $U$ chosen (a) arbitrarily and (b) randomly, both fail to achieve competitive ratio better than $\frac{1}{2}$. Their solution is an elegant randomized algorithm called \ranking~that works as follows: it begins by picking a \emph{uniformly random permutation} of the vertices in $U$ (called the ``ranking'' of the vertices). Then, as a vertex in $V$ arrives, it is matched to the highest-ranked unmatched neighbor. Surprisingly, this idea of using correlated randomness for all the arriving vertices achieves the optimal competitive ratio of $1-\frac{1}{e}$.\\

How do we generalize \ranking~in presence of unrestricted weights $b_u$? The natural \greedy~algorithm which matches an arriving vertex to the highest-weighted unmatched neighbor, achieves a competitive ratio of $\frac{1}{2}$ (see
Appendix \ref{app3} for a proof). No deterministic algorithm can do
better. While the optimality of \ranking~for unweighted matching suggests choosing random ranking permutations of $U$, \ranking~itself can do as badly as factor $\frac{1}{n}$ for some weighted instances.

The main challenge in solving this problem
is that a good algorithm must follow very different strategies
depending on the weights in the input instance. \greedy~and
\ranking~are both suboptimal for this problem, but both have ideas
which are essential to its solution. In particular, they perform well
on distinct classes of inputs, namely, \greedy~on highly skewed
weights and \ranking~on equal weights. The following observation about \ranking~helps us bridge the gap between these two approaches: Suppose we perturb each weight $b_u$ identically and independently and then sort the vertices in the order of decreasing perturbed weights. When all the weights are equal, the resulting order happens to be a uniformly random permutation of $U$ and thus, \ranking~on unweighted instances can be thought of as \greedy~on perturbed weights! We use this insight to construct our solution to the vertex-weighted matching problem. While the nature of perturbation used did not matter in the above discussion, we need a very specific perturbation procedure for general vertex-weights.

Our algorithm is defined below:

\begin{algorithm}[H]
\caption{{\sc Perturbed-Greedy}} For each $u \in U$, pick a number
$x_u$ uniformly at random from $[0, 1]$.\\
Define the function $\psi(x) := 1 - e^{-(1-x)}$.\\
\ForEach{arriving $v \in V$} { Match $v$ to the unmatched neighbor
$u \in U$ with the highest value of $b_u \psi(x_u)$. Break ties
consistently, say by vertex id. }
\end{algorithm}

\noindent{\bf Remarks}: It is not obvious, and indeed is remarkable in
our opinion, that it suffices to perturb each weight $b_u$ completely
independently of other weights. In Appendix \ref{app6}, we provide
intuition as to why such is the case. Also, the particular form of the
function $\psi$ is not a pre-conceived choice, but rather an artifact
of our analysis. This combined with the discussion in Section
\ref{sec:implications} seems to suggest that $\psi$ is the `right'
perturbation function. We note that we can also choose the function
$\psi(x)$ to be $1-e^{-x}$, which keeps the algorithm and results
unchanged. Finally, we note that the multipliers $y_u = \psi(x_u)$ are
distributed according to the density function $f(y) = \frac{1}{1-y}$
for $y \in \left[0, 1-\frac{1}{e}\right]$. Therefore, we could have
equivalently stated our algorithm as: For each $u\in U$, choose a
random multiplier $y_u \in \left[0, 1-\frac{1}{e}\right]$ from the
above distribution, and use $b_uy_u$ as the perturbed weight.

Our main result is the following theorem. The second part of
the theorem follows from the optimality of \ranking~for unweighted
matching \cite{KVV90}.
\begin{theorem}
\label{thm:main}
\pgreedy~achieves a competitive ratio of $1-1/e$ for the
vertex-weighted online bipartite matching problem. No (randomized)
algorithm has a better competitive ratio.
\end{theorem}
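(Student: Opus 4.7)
The plan is a randomized primal-dual analysis, generalizing the technique used to analyze \ranking~in the unweighted case. For each $u \in U$ matched by \pgreedy, define $\alpha_u := b_u(1 - \psi(x_u)) = b_u\,e^{-(1-x_u)}$, and $\alpha_u := 0$ otherwise. For each $v \in V$ matched to some $u$, define $\beta_v := b_u\,\psi(x_u)$, and $\beta_v := 0$ otherwise. Since $\alpha_u + \beta_v = b_u$ on each matched edge, $\sum_u \alpha_u + \sum_v \beta_v = \text{ALG}$ deterministically. It therefore suffices to prove the per-edge inequality
\[
\mathbb{E}\bigl[\alpha_{u^*} + \beta_{v^*}\bigr] \;\geq\; \left(1-\tfrac{1}{e}\right) b_{u^*} \qquad \text{for every edge } (u^*,v^*) \in E,
\]
since summing it over the edges of an offline optimum matching then yields $\mathbb{E}[\text{ALG}] \geq (1-1/e)\,\text{OPT}$. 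The matching hardness in the theorem is immediate from \cite{KVV90}, since the unweighted case is a special case of our problem.

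To prove the per-edge inequality, fix $(u^*,v^*)$, condition on the random bits $\{x_u\}_{u\neq u^*}$, and integrate over $x_{u^*} \sim \mathrm{Unif}[0,1]$. The core ingredient is a Dominance Lemma: for each fixing of the other bits, there exists a threshold $\theta \in [0,1]$ such that (i) $u^*$ is matched by \pgreedy~if and only if $x_{u^*} \leq \theta$, and (ii) whenever $x_{u^*} > \theta$, the vertex $v^*$ is matched to some $u'$ with $b_{u'}\psi(x_{u'}) \geq b_{u^*}\psi(\theta)$. Part (i) follows from a monotonicity argument: since $\psi$ is decreasing, lowering $x_{u^*}$ only raises $u^*$'s perturbed weight $b_{u^*}\psi(x_{u^*})$, so the set of $x_{u^*}$-values at which $u^*$ is matched is closed downward. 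Part (ii) is proved by a cascade/exchange argument: at the boundary $x_{u^*} = \theta$, some vertex $v_0$ just barely chooses $u^*$ over its next-best alternative, which must therefore have perturbed weight at least $b_{u^*}\psi(\theta)$; pushing $x_{u^*}$ past $\theta$ forces $v_0$ to take this alternative, triggering a chain of displacements that preserves the lower bound $b_{u^*}\psi(\theta)$ at every step and eventually delivers a match of at least this quality to $v^*$.

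Once the Dominance Lemma is in hand, integrating over $x_{u^*}$ together with the specific form $\psi(x) = 1 - e^{-(1-x)}$---which satisfies $1 - \psi(x) = e^{-(1-x)}$ and hence the ODE-type identity $(1-\psi)' = -(1-\psi)$---yields the required bound. This identity is exactly what is needed to balance the $\alpha_{u^*}$ contribution on $[0,\theta]$ against the $\beta_{v^*}$ contribution on $(\theta,1]$ so that their combination is at least $(1-1/e)\,b_{u^*}$; this calibration is precisely the reason for the particular choice of $\psi$.

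The principal obstacle is the cascade step (ii) of the Dominance Lemma. In the unweighted setting priorities are just ranks in a random permutation and the cascade is essentially the original \ranking argument; in the vertex-weighted case the priority $b_u\psi(x_u)$ mixes random bits with fixed weights, and the displacement chain must be shown to maintain the threshold-level lower bound $b_{u^*}\psi(\theta)$ in terms of this compound quantity, even as the chain may pass through vertices of very different intrinsic weights. Establishing part (i), carrying out the final integration, and the outer LP-duality argument are comparatively routine once the cascade is controlled.
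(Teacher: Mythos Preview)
Your approach is the (now standard) randomized primal--dual framework, which is genuinely different from the paper's argument. The paper does not use dual variables; it discretizes the randomness into positions $t\in[k]$, defines the sets $Q_t,R_t$ of matched/unmatched events at each position, isolates \emph{marginal loss events} $S_t\subseteq R_t$ (those $(\sigma,t,u)$ for which $u$ is unmatched at position $t$ but would be matched at $t{-}1$), and builds a set-valued charging map $f$ from $\bigcup_t S_t$ into $\bigcup_s Q_s$ whose images are pairwise disjoint. The structural core (the paper's Lemma~2) is the alternating-path statement that if $u$ at position $t$ is unmatched in $\sigma$, then for every re-placement $\sigma_u^i$ the optimal partner of $u$ is matched to some $u'$ with $\psi(t)\,b_u \le \psi(\sigma_u^i(u'))\,b_{u'}$. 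Summing these scaled inequalities over $\bigcup_t S_t$ and choosing $\psi$ so that $\psi(t)+\frac{1}{k}\sum_{s\ge t}\psi(s)\ge (k-t+1)/k$ yields the factor. Your primal--dual route is cleaner and avoids both the discretization and the marginal-loss bookkeeping, but the two proofs rest on the same alternating-path lemma.

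That said, your integration as written does not close. Using $\alpha_{u^*}$ on $[0,\theta]$ and $\beta_{v^*}\ge b_{u^*}\psi(\theta)$ only on $(\theta,1]$ gives
\[
\int_0^\theta b_{u^*}\, e^{-(1-x)}\,dx \;+\; (1-\theta)\,b_{u^*}\,\psi(\theta)
\;=\; b_{u^*}\Bigl[(1-\theta) - e^{-1} + \theta\,e^{-(1-\theta)}\Bigr],
\]
which equals $(1-1/e)\,b_{u^*}$ at $\theta\in\{0,1\}$ but dips strictly below in between (e.g.\ $\approx 0.44\,b_{u^*}$ at $\theta=1/2$). The missing piece is that $\beta_{v^*}\ge b_{u^*}\psi(\theta)$ must hold for \emph{all} $x_{u^*}\in[0,1]$; with the $\beta$ bound integrated over the full interval one gets $\bigl(e^{-(1-\theta)}-e^{-1}\bigr)+\psi(\theta)=1-1/e$ identically. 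Moreover, your cascade is placed in the wrong regime. For $x_{u^*}>\theta$ the vertex $u^*$ is unmatched and hence available when $v^*$ arrives, so $\beta_{v^*}\ge b_{u^*}\psi(x_{u^*})$ directly; letting $x_{u^*}\downarrow\theta$ gives the bound with no cascade. The nontrivial direction is $x_{u^*}\le\theta$, where $u^*$ is matched (possibly not to $v^*$): here one compares the run with $u^*$ present to the run with $u^*$ removed, and the alternating-path argument---precisely Case~2 of the paper's Lemma~2---shows that every online vertex's match quality weakly improves, so $\beta_{v^*}$ is at least its value in the $u^*$-removed run, which is $\ge b_{u^*}\psi(\theta)$.
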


In addition to the basic idea (from the proof of \ranking) of charging
unmatched vertices in some probabilistic events to matched vertices in
other events, our analysis needs to handle the new complexity
introduced due to the weights on vertices. At a very high level, just
like the algorithm, our analysis also manages to pull together the
essence of the analyses of both \greedy~and \ranking.

\subsection{Implications of the Result}
\label{sec:implications}

\noindent{\bf Finding the optimal distribution over permutations of
  $U$:} Since \pgreedy~also chooses ranking orders through
randomization, we can interpret it as a non-uniform \ranking, where it
chooses permutations of $U$ from the `optimal' distribution. But we
could have posed the following question, without the knowledge of our
algorithm: How do we find an optimal non-uniform distribution over
permutations of $U$? As a start, let us consider the case of $2\times
2$ graphs. By exhaustive search over all $2\times 2$ graphs, we can
figure out the best \ranking~like algorithm for $2\times 2$ graphs
(Figure \ref{fig:twobytwo} in Appendix \ref{app9} shows the only two
potentially `hard' instances in $2\times 2$ graphs). This algorithm
picks the permutation $(u_1, u_2)$ with probability
$\tfrac{\alpha}{1+\alpha}$ and the permutation $(u_2, u_1)$ with
probability $\tfrac{1}{1+\alpha}$ (where $\alpha = b_{u_1} /
b_{u_2}$), and then proceeds to match to the highest neighbor. This
algorithm gives a factor of $\frac{\alpha^2 + \alpha + 1}{(\alpha +
  1)^2}$, which is minimized at $\alpha = 1$, giving a factor of $3/4$
(in which case the algorithm is simply the same as \ranking).

An attempt to generalize this idea to larger graphs fails due to a
blow-up in complexity.  In general, we need a probability variable
$p_\sigma$ for every permutation $\sigma$ of $U$. The expected weight
of the matching produced by the algorithm on a graph $G$, is a linear
expression $\mathrm{ALG}_G(p_{\sigma_1}, p_{\sigma_2}, ...)$. Thus,
the optimal distribution over permutations is given by the optimal
solution of a linear program in the $p_\sigma$ variables. But this LP
has exponentially many variables (one per permutation) and constraints
(one per ``canonical graph instance''). Therefore, our algorithm can
be thought of as solving this extremely large LP through a very simple
process.\\

\noindent \textbf{General capacities / Matching $u\in U$ multiple
  times}: Consider the following generalization of the online
vertex-weighted bipartite matching problem: Apart from a weight $b_u$,
each vertex $u \in U$ has a capacity $c_u$ such that $u$ can be
matched to \emph{at most} $c_u$ vertices in $V$. The capacities allow
us to better model `budgets' in many practical situations,
\textit{e.g.}, in online advertising. Our algorithm easily handles
general capacities: For each $u\in U$, make $c_u$ copies of $u$ and
solve the resulting instance with unit capacities: It is easy to
verify that the solution is $\left(1-\frac{1}{e}\right)$-approximate
in expectation for the original problem with capacities.\\

\noindent{\bf Online budgeted allocation :- The \emph{single bids}
  case vs. the \emph{small bids} case}: As noted earlier and proved in
Appendix \ref{app7}, the special case of the online budgeted
allocation problem with all the non-zero bids of an agent being equal
($b_{ij} = b_i$ or 0), reduces to our vertex-weighted matching
problem. Since each agent provides a single bid value for all items,
let us call this restriction the \emph{single bids} case.

\begin{corollary}
\pgreedy~achieves a competitive ratio of $1-1/e$ for the \emph{single
  bids} case of the online budgeted allocation problem.
\end{corollary}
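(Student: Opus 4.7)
The plan is to reduce the \emph{single bids} online budgeted allocation instance to an instance of online vertex-weighted bipartite matching, run \pgreedy{} on the latter, and invoke Theorem~\ref{thm:main}. The reduction is the one established in Appendix~\ref{app7}; I outline it here in the form needed to close the loop.

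First, I would build the vertex set $U$ by splitting each agent $i$ into $\lfloor B_i/b_i\rfloor$ ``full'' copies of weight $b_i$ together with one ``remainder'' copy of weight $r_i := B_i - \lfloor B_i/b_i\rfloor \cdot b_i \in [0, b_i)$. An arriving item $j$ is connected to every copy of every agent $i$ with $b_{ij} > 0$. I would then run \pgreedy{} on this unit-capacity vertex-weighted instance, and interpret the resulting matching as an allocation by giving item $j$ to the agent whose copy it is matched to.

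Next, I would verify the two comparison inequalities that make the reduction faithful. If \pgreedy{} matches $k_i$ copies of agent $i$, the total weight of these copies is at most $\min(k_i b_i,\, B_i)$, because each full copy weighs $b_i$ and the union of all $\lfloor B_i/b_i\rfloor+1$ copies weighs exactly $B_i$; this is precisely the revenue the budgeted allocation collects from $i$ when it receives $k_i$ items it bids on, so $\mathrm{ALG}_{\mathrm{alloc}} \ge \mathrm{ALG}_{\mathrm{match}}$ pointwise over the shared randomness. Conversely, given any optimal allocation that assigns set $S_i^\ast$ to agent $i$, I would route items of $S_i^\ast$ first to agent $i$'s full copies and, only if $|S_i^\ast| > \lfloor B_i/b_i\rfloor$, additionally to the remainder copy; a short case split on whether the budget of $i$ is saturated shows the resulting matching realises weight $\min(B_i, |S_i^\ast|\cdot b_i)$ per agent, so $\mathrm{OPT}_{\mathrm{match}} \ge \mathrm{OPT}_{\mathrm{alloc}}$.

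Finally, chaining these bounds with Theorem~\ref{thm:main} gives
\begin{equation*}
\mathbf{E}[\mathrm{ALG}_{\mathrm{alloc}}] \;\ge\; \mathbf{E}[\mathrm{ALG}_{\mathrm{match}}] \;\ge\; \Bigl(1-\tfrac{1}{e}\Bigr)\cdot \mathrm{OPT}_{\mathrm{match}} \;\ge\; \Bigl(1-\tfrac{1}{e}\Bigr)\cdot \mathrm{OPT}_{\mathrm{alloc}},
\end{equation*}
which is the claimed ratio. The only delicate point is ensuring the remainder copy behaves correctly when $B_i/b_i$ is not an integer: it must be ``heavy enough'' to close the gap to $B_i$ on the OPT side, yet not let \pgreedy{} over-collect beyond the allocation revenue on the ALG side. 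Both requirements reduce to the observation that any submultiset of the copies of agent $i$ consisting of $k_i$ copies has weight bounded by $\min(k_i b_i, B_i)$, with equality whenever the full copies are exhausted before the remainder copy is used.
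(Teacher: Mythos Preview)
Your proposal is correct and follows essentially the same route as the paper: the reduction in Appendix~\ref{app7} splits each agent into $\lfloor B_i/b_i\rfloor$ full copies plus a remainder copy, proves the two sandwiching inequalities $\mathrm{ALG}_A \ge \mathrm{ALG}_M$ and $\mathrm{OPT}_M \ge \mathrm{OPT}_A$ by the same case analysis you outline, and then chains them with Theorem~\ref{thm:main}. The only cosmetic difference is that the paper omits the remainder copy when $r_i = 0$, whereas you always create it; a weight-zero vertex is immaterial, so this does not affect the argument.
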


Note that the \emph{small bids} case ($b_{ij} \ll B_i$) studied in
\cite{MSVV05, BJN07} does not reduce to or from the \emph{single bids}
case. Yet, as it turns out, \pgreedy~is equivalent to the algorithm of
\cite{MSVV05} - let us call it MSVV - on instances that belong to the
intersection of the two cases. When every agent has a \emph{single
  small bid} value, the problem corresponds to vertex-weighted
matching with large capacities $c_u$ for every vertex $u$. Recall that
we handle capacities on $u\in U$ by making $c_u$ copies $u_1, u_2,
..., u_{c_u}$ of $u$. For each of these copies, we choose a random
$x_{u_i} \in [0,1]$ uniformly and independently. In expectation, the
$x_{u_i}$'s are uniformly distributed in the interval $[0,1]$. Also
observe that \pgreedy~will match $u_1, u_2, ..., u_{c_u}$ in the
increasing order of $x_{u_i}$'s, if at all. Therefore, at any point in
the algorithm, if $u_i$ is the unmatched copy of $u$ with smallest
$x_{u_i}$ (and consequently highest multiplier $\psi(x_{u_i})$) then
$x_{u_i}$ is in expectation equal to the fraction of the capacity
$c_u$ used up at that point. But \MSVV~uses exactly the scaling factor
$\psi(T)$ where $T$ is the fraction of spent budget at any point. We
conclude that in expectation, \pgreedy~tends to \MSVV~as the
capacities grow large, in the single small bids case.

It is important to see that this phenomenon is not merely a
consequence of the common choice of function $\psi$. In fact, the
function $\psi$ is not a matter of choice at all - it is a by-product
of both analyses (Refer to the remark at the end of Section
\ref{sec:main-proof}). The fact that it happens to be the exact same
function seems to suggest that $\psi$ is the `right'
function. Moreover, the analyses of the two algorithms do not imply
one-another. Our variables are about expected gains and losses over a
probability space, while the algorithm in~\cite{MSVV05} is purely
deterministic.

This smooth `interface' between the seemingly unrelated \emph{single bids} and \emph{small bids} cases hints towards the existence of a unified solution to the general online budgeted allocation problem.

\subsection{Other Related Work}\label{related_work}
Our problem is a special case of online bipartite matching with edge
weights, which has been studied extensively in the literature. With
general edge weights and vertices arriving in adversarial order, every
algorithm can be arbitrarily bad (see Appendix~\ref{app4}). There are two ways to
get around this hardness: (a) assume that vertices arrive in a random
order, and/or (b) assume some restriction on the edge weights.

When the vertices arrive in random order, it corresponds to a
generalization of the {\em secretary} problem to transversal
matroids~\cite{BIK07}. Dimitrov and Plaxton~\cite{DP08} study a
special case where the weight of an edge $(u,v)$ depends only on the
vertex $v$ -- this is similar to the problem we study, except that it
assumes a random arrival model (and assumes vertex weights on the {\em
  online} side). Korula and Pal~\cite{KP09} give an
$\tfrac{1}{8}$-competitive algorithm for the problem with general edge
weights and for the general {\em secretary} problem on transversal
matroids.

If one does not assume random arrival order, every algorithm
can be arbitrarily bad with general edge weights or even with weights
on arriving vertices. ~\cite{KP93} introduce the assumption of edge
weights coming from a metric space and give an optimal deterministic
algorithm with a competitive factor of $\tfrac{1}{3}$. As far as we
know, no better randomized algorithm is known for this problem.

Finally, there has been other recent work \cite{DH09,GM08,FMMM09}, although
not directly related to our results, which study online bipartite
matching and budgeted allocations in stochastic arrival settings.\\

\noindent{\bf Roadmap:} The rest of the paper is structured as follows: In
Section~\ref{sec:prelim} we set up the preliminaries and provide a
warm up analysis of a proof of \ranking~in the unweighted special
case. Section~\ref{sec:main-proof} contains the proof of
Theorem~\ref{thm:main}.

\section{Preliminaries}
\label{sec:prelim}
\subsection{Problem Statement}
\label{section:statement}
Consider an undirected bipartite graph $G(U,V,E)$. The vertices of
$U$, which we will refer to as the \emph{offline} side, are known from
the start. We are also given a weight $b_u$ for each vertex $u \in U$.
The vertices of $V$, referred to as the \emph{online} side, arrive one
at a time (in an arbitrary order). When a vertex $v$ arrives, all the
edges incident to it are revealed, and at this point, the vertex $v$
can be matched to one of its unmatched neighbors (irrevocably) or left
permanently unmatched. The goal is to maximize the sum of the weights
of matched vertices in $U$.

Let permutation $\pi$ represent the arrival order of vertices in $V$
and let $M$ be the subset of matched vertices of $U$ at the end. Then
for the input $(G, \pi)$, the gain of the algorithm, denoted by $ALG(G,
\pi)$, is $\sum_{u \in M}{b_u}$.

We use competitive analysis to analyze the performance of an
algorithm.  Let $M^*(G)$ be an optimal (offline) matching, i.e. one
that maximizes the total gain for $G$ (note that the optimal matching
depends only on $G$, and is independent of $\pi$), and let
$\mathrm{OPT}(G)$ be the total gain achieved by $M^*(G)$. Then the
competitive ratio of an algorithm is $\min_{G, \pi}
\tfrac{\mathrm{ALG}(G, \pi)}{\mathrm{OPT}(G)}$. Our goal is to devise
an online algorithm with a high competitive ratio.

\begin{definition}[$M^*(G)$] 
For a given $G$, we will fix a particular optimal matching, and refer
to it as the optimal offline matching $M^*(G)$.
\end{definition}
 
\begin{definition}[$u^*$]
Given a $G$, its optimal offline matching $M^*(G)$ and a $u \in U$
that is matched in $M^*(G)$, we define $u^* \in V$ as its partner in $M^*(G)$.
\end{definition}

\subsection{Warm-up: Analysis of {\sc Ranking} for Unweighted Online Bipartite Matching}
\label{section:KVV}

Recall that online bipartite matching is a special case of our problem
in which the weight of each vertex is $1$, i.e. $b_u = 1$ for all $u
\in U$.  \cite{KVV90} gave an elegant randomized algorithm for this
problem and showed that it achieves a competitive ratio of $(1-1/e)$
in expectation. In this section, we will re-prove this classical result
as a warm-up for the proof of the main result. The following proof is
based on those presented by \cite{BM08,GM08} previously.

\vspace{0.1in}

\begin{algorithm}[H]
\caption{{\sc Ranking}}
Choose a random permutation $\sigma$ of $U$ uniformly from the space of all permutations.\\
\ForEach{arriving $v \in V$}
{
	Match $v$ to the unmatched neighbor in $u$ which appears earliest in $\sigma$.\\
}
\end{algorithm}

\begin{theorem}[~\cite{KVV90}]
\label{thm:kvv}
In expectation, the competitive ratio of {\sc Ranking} is at least $1-\frac{1}{e}$.
\end{theorem}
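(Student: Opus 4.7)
The plan is to adapt the randomized primal-dual argument (the streamlined form of the analyses in Birnbaum-Mathieu and Goel-Mehta that the excerpt cites). First I rewrite Ranking equivalently by drawing $x_u \in [0,1]$ independently and uniformly for each $u \in U$ and letting $\sigma$ sort $U$ by the $x_u$'s; with probability one there are no ties, so this matches Ranking exactly.

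For each run I assign primal-dual variables: on every matched pair $(u,v)$ produced by the algorithm set $\alpha_u := e^{x_u - 1}$ and $\beta_v := 1 - e^{x_u - 1}$, and set them to zero on unmatched vertices. Since $\alpha_u + \beta_v = 1$ on each matched edge, one has $\sum_u \alpha_u + \sum_v \beta_v = |\mathrm{ALG}|$ deterministically. Thus it suffices to prove, for every OPT edge $(u, u^*) \in M^*$, the key inequality $\mathbb{E}[\alpha_u + \beta_{u^*}] \geq 1 - 1/e$; summing this over $M^*$ (each vertex appears in at most one OPT edge) yields $(1 - 1/e)|M^*| \leq \mathbb{E}[|\mathrm{ALG}|]$.

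The inequality is proved by conditioning on $\{x_{u'}\}_{u' \neq u}$ and invoking a monotonicity lemma proved by induction on the arrival order in $V$. The lemma asserts (i) there is a threshold $\theta \in [0,1]$ (depending on $\{x_{u'}\}_{u' \neq u}$) such that $u$ is matched if and only if $x_u < \theta$; and (ii) uniformly in $x_u$, if $u^*$ is matched by the algorithm then its partner has rank at most $\theta$, and if $u^*$ is unmatched for some value of $x_u$ then necessarily $\theta = 1$ (so $1 - e^{\theta - 1} = 0$). The upshot is that $\beta_{u^*} \geq 1 - e^{\theta - 1}$ holds for every value of $x_u$. Integrating $x_u$ uniformly over $[0,1]$ yields
\[
\mathbb{E}\bigl[\alpha_u + \beta_{u^*} \,\big|\, \{x_{u'}\}_{u' \neq u}\bigr] \ \geq\ \int_0^\theta e^{y-1}\,dy \,+\, \bigl(1 - e^{\theta - 1}\bigr) \ =\ \bigl(e^{\theta-1} - e^{-1}\bigr) + \bigl(1 - e^{\theta-1}\bigr) \ =\ 1 - \tfrac{1}{e},
\]
which is independent of $\theta$, giving the key inequality after taking expectation over $\{x_{u'}\}_{u' \neq u}$.

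The main obstacle is the uniform rank bound in part (ii) of the lemma, particularly in the regime $x_u < \theta$ where $u$ is matched but possibly to some $v \neq u^*$. The argument is a swap-chain analysis: varying $x_u$ only cascades a chain of re-matches starting at $u$, and one must verify by induction on the arrival order that whenever $u^*$ ends up matched along this chain, its partner has rank at most $\theta$ (pinned by the ``boundary'' run $x_u = \theta$, in which $u$ itself is an unmatched rank-$\theta$ candidate for $u^*$). Once this structural fact is established, the rest is the direct integral calculation above, which ``miraculously'' cancels the $\theta$-dependence to produce exactly $1 - 1/e$.
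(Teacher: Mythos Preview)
Your proof is correct but follows a genuinely different route from the paper's. The paper gives a combinatorial charging argument: it defines $x_t$ as the probability that the rank-$t$ vertex is matched, builds a set-valued map $f$ from ``bad'' events $(\sigma,t)\in R_t$ (unmatched vertices) to ``good'' events in $\bigcup_{s\le t} Q_s$ via Lemma~\ref{lemma1}, proves that for fixed $t$ these images are disjoint (Claim~\ref{claim1}), and extracts the recurrence $1-x_t\le \tfrac{1}{n}\sum_{s\le t}x_s$, from which $\sum_t x_t\ge (1-1/e)\,n$ follows. Your argument is instead the randomized primal--dual proof: you fix the splitting function $e^{x-1}$ at the outset, condition on all coordinates but $x_u$, and reduce the key inequality to a single integral in which the threshold $\theta$ cancels exactly.

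Both proofs rest on the same structural alternating-path/monotonicity fact---your part (ii) is essentially the content of the paper's Lemma~\ref{lemma1} (and Lemma~\ref{lemma2})---but they package it differently. Your approach is slicker for the unweighted case and already bakes in the function $1-\psi(x)=e^{x-1}$ that the paper introduces only later for weights; indeed your calculation would carry over to vertex weights almost verbatim. The paper's approach, by contrast, derives a recurrence on the $x_t$'s without presupposing any particular $\psi$, which is precisely why the authors present it as a warm-up: in Section~\ref{sec:main-proof} the form of $\psi$ \emph{emerges} from tightening an analogous charging inequality (equation~\eqref{eq18}) rather than being assumed, and the authors make a point of this in their closing remark.
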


In this warm-up exercise, we will simplify the analysis by making the following assumptions: $|U| = |V| = n$ and $G$ has a perfect matching. These two assumptions imply that $\mathrm{OPT} = n$ and that the optimal matching $M^*(G)$ is a perfect matching.

For any permutation $\sigma$, let {\sc Ranking}$(\sigma$) denote the
matching produced by {\sc Ranking} when the randomly chosen
permutation happens to be $\sigma$.
For a permutation $\sigma = (u_1, u_2, ..., u_n)$ of $U$, we say that
a vertex $u = u_t$ has rank $\sigma(u) = t$. 
% We say that $u$ is matched in $\sigma$ if ${\sc Ranking}(\sigma)$
% matches $u$.  
Consider the random variable 
$$y_{\sigma, i}\ =\ \left\lbrace
\begin{tabular}{c c c} 
1 & & If the vertex at rank $i$ in $\sigma$ is matched by {\sc Ranking}$(\sigma)$.\\ 
0 & & Otherwise\\ 
\end{tabular}
\right.$$

\begin{definition}[$Q_t$, $R_t$]
$Q_t$ is defined as the set of all occurrences of matched vertices
  in the probability space.
$$Q_t = \{\ (\sigma, t)\ :\ y_{\sigma,t} = 1\ \}$$
Similarly, $R_t$ is defined as the set of all occurrences of unmatched vertices
in the probability space.
$$R_t =
\{\ (\sigma, t)\ :\ y_{\sigma,t} = 0\ \}$$
\end{definition}

Let $x_t$ be the probability that the vertex at rank $t$ in $\sigma$
is matched in {\sc Ranking}$(\sigma)$, over the random choice of
permutation $\sigma$. Then, $x_t\ =\ \frac{|Q_t|}{n!}$ and
$1-x_t\ =\ \frac{|R_t|}{n!}$. The expected gain of the algorithm is
$\mathrm{ALG}_{G, \pi} = \sum_tx_t$.

\begin{definition}[$\sigma_u^i$]
For any $\sigma$, let
$\sigma_u^i$ be the permutation obtained by removing $u$ from $\sigma$
and inserting it back into $\sigma$ at position $i$.
\end{definition}

\begin{lemma}
\label{lemma1}
If the vertex $u$ at rank $t$ in $\sigma$ is unmatched by {\sc
  Ranking}($\sigma$), then for every $1 \leq i \leq n$, $u^*$ is matched in
{\sc Ranking}($\sigma_u^i$) to a vertex $u'$ such that $\sigma_u^i(u') \leq
t$.
\end{lemma}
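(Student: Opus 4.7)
The plan is to prove the lemma by comparing three related executions of {\sc Ranking}: on $\sigma$, on $\sigma_{-u}$ (that is, $\sigma$ with $u$ deleted), and on $\sigma_u^i$.

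First, I would argue by a straightforward induction on arrival order that if $u$ is unmatched by {\sc Ranking}$(\sigma)$, then {\sc Ranking}$(\sigma)$ and {\sc Ranking}$(\sigma_{-u})$ produce the same matching on $U \setminus \{u\}$. The point is that every arriving $v$ that matched in {\sc Ranking}$(\sigma)$ did so to some $u' \neq u$ with $\sigma(u') < \sigma(u) = t$, and removing $u$ from the permutation leaves this choice intact. As a direct corollary, $u^*$ is matched in both executions to the same vertex $u'_0 \in N(u^*)$, with $\sigma(u'_0) < t$.

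Next, I would analyze the difference between {\sc Ranking}$(\sigma_{-u})$ and {\sc Ranking}$(\sigma_u^i)$, which amounts to inserting $u$ at position $i$. Processing the arrivals in parallel, the two executions agree verbatim until the first arrival $v_{j_1}$ whose smallest-rank unmatched neighbor in $\sigma_u^i$ is $u$ rather than its $\sigma_{-u}$-choice $w_1$ (so $i < \sigma_u^i(w_1)$); this initiates a cascade along an alternating path $u, v_{j_1}, w_1, v_{j_2}, w_2, \ldots$, where $v_{j_{k+1}}$ is the earliest subsequent arrival that prefers the newly freed $w_k$ in $\sigma_u^i$ over its $\sigma_{-u}$-partner $w_{k+1}$. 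A crucial monotonicity follows: $i < \sigma_u^i(w_1) < \sigma_u^i(w_2) < \cdots$, since each switch is driven by a strict preference for a newly freed vertex of lower rank. Also note that any vertex $w \in U \setminus \{u\}$ with $\sigma(w) < t$ satisfies $\sigma_u^i(w) \leq \sigma(w) + 1 \leq t$.

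Finally, I would determine $u^*$'s partner in {\sc Ranking}$(\sigma_u^i)$. Case (a): $u^*$ is off the alternating path, so its $\sigma_u^i$-partner equals its $\sigma_{-u}$-partner $u'_0$, whose rank satisfies $\sigma_u^i(u'_0) \leq t$ by the bound above. Case (b): $u^*$ is on the path, so $u^* = v_{j_k}$ for some $k \geq 1$ and its $\sigma_{-u}$-partner $w_k$ must coincide with $u'_0$; by the cascade's monotonicity, the $\sigma_u^i$-partner of $u^*$, namely $w_{k-1}$ (with the convention $w_0 := u$), has $\sigma_u^i$-rank strictly less than $\sigma_u^i(w_k) = \sigma_u^i(u'_0) \leq t$. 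In every case the desired bound $\sigma_u^i(u') \leq t$ holds.

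The main technical obstacle is making the cascade argument fully rigorous: proving that the symmetric difference of $M_{\sigma_{-u}}$ and $M_{\sigma_u^i}$ is exactly a single alternating path (not a cycle or multiple components) and that the monotonicity propagates correctly through each displacement. This will require a careful induction on arrival steps, maintaining the invariant that the two executions agree outside of the portion of the path explored so far, and that the only vertices whose status differs between the two executions are those encountered along the cascade.
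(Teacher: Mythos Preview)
Your proposal is correct and takes essentially the same approach as the paper, which defers Lemma~\ref{lemma1} to the proof of Lemma~\ref{lemma2}: there the alternating-path structure of the symmetric difference between the matchings in $\sigma$ and $\sigma_u^i$ is established by exactly the cascade induction you outline, together with the rank monotonicity along the path. Your factoring through $\sigma_{-u}$ is a minor presentational variant that lets you treat the cases $i\ge t$ and $i<t$ uniformly, whereas the paper handles them separately.
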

\begin{proof}
Refer to Lemma~\ref{lemma2} in the analysis of
\pgreedy~for the proof of a more general version of this statement.
\end{proof}

In other words, for every vertex that remains unmatched in some event
in the probability space, there are many matched vertices in many
different events in the space. In the remaining part of this section,
we quantify this effect by bounding $1-x_t$, which is the probability
that the vertex at rank $t$ in $\sigma$ (chosen randomly by {\sc
  Ranking}) is unmatched, in terms of some of the $x_t$s.

\begin{definition}[Charging map $f(\sigma, t)$]
$f$ is a map from bad events (where vertices remain unmatched) to good
  events (where vertices get matched).  For each $(\sigma,t) \in
  R_t$, $$f(\sigma, t)\ =\ \{ (\sigma_u^i, s)\ :\ \mbox{$1\leq i\leq
    n$, $\sigma(u) = t$ and {\sc Ranking}$(\sigma_u^i)$ matches $u^*$
    to $u'$ where $\sigma_u^i(u') = s$}\}$$
\end{definition}

In other words, let $u$ be the vertex at rank $t$ in $\sigma$. Then
$f(\sigma,t)$ contains all $(\sigma',s)$, such that $\sigma'$ can be
obtained from $\sigma$ by moving $u$ to some position and $s$ is the
rank of the vertex to which $u^*$, the optimal partner of $u$, is
matched in $\sigma'$.

For every $(\sigma, t) \in R_t$, $(\pi, s) \in f(\sigma, t)$ implies
$y_{\pi, s} = 1$ for some $s \leq t$. Therefore, $$\bigcup_{(\sigma,
  t)\in R_t}{f(\sigma, t)}\ \subseteq\ \bigcup_{s\leq t}{Q_s}$$

\begin{claim}
\label{claim1}
If $(\rho, s) \in f(\sigma, t)$ and $(\rho, s) \in
f(\overline{\sigma}, t)$, then $\sigma = \overline{\sigma}$.
\end{claim}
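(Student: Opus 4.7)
The plan is to show that, given the pair $(\rho,s)$ together with the (fixed) index $t$, we can deterministically reconstruct a unique candidate for $\sigma$, so that no two distinct $\sigma$'s can map to the same $(\rho,s)$ at the same $t$.

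Concretely, suppose $(\rho,s)\in f(\sigma,t)$. Unpacking the definition of $f$, this means there is a vertex $u$ with $\sigma(u)=t$ such that $\rho=\sigma_u^i$ for some $i$, and \ranking$(\rho)$ matches $u^*$ to the vertex $u'$ at position $s$ in $\rho$. The idea is to read off $u$ directly from $(\rho,s)$ in three deterministic steps: \textbf{(i)} set $u':=\rho[s]$, the vertex at rank $s$ in $\rho$; \textbf{(ii)} let $u^*$ be the vertex of $V$ matched to $u'$ by \ranking$(\rho)$ — this is well-defined because \ranking produces a matching and $u'$ is matched (otherwise $(\rho,s)$ could not lie in $f(\sigma,t)$); \textbf{(iii)} recover $u$ as the unique partner of $u^*$ under the fixed optimal offline matching $M^*(G)$, which is well-defined by the Definition of $u^*$. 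Thus $u$ is a function of $(\rho,s)$ alone.

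Once $u$ is identified, $\sigma$ is forced: since $\sigma$ must satisfy $\sigma(u)=t$ and $\rho$ is obtained from $\sigma$ by relocating $u$, we have $\sigma=\rho_u^t$ (i.e., remove $u$ from $\rho$ and reinsert it at position $t$). Applying the exact same reconstruction to $(\rho,s)\in f(\overline{\sigma},t)$ recovers the same $u$ and hence the same $\sigma=\rho_u^t=\overline{\sigma}$, proving the claim.

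There is no real obstacle here — the argument is just a careful inversion of the definition of $f$. The only point to be a bit careful about is step (ii): we must use the fact that $(\rho,s)\in f(\sigma,t)$ guarantees $u'$ is matched by \ranking$(\rho)$, so its partner $u^*$ exists and is unique, and then step (iii) uses that $M^*(G)$ is a fixed matching specified once and for all, so $u$ is unambiguous given $u^*$.
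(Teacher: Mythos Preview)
Your proof is correct and follows essentially the same approach as the paper: both identify $u'$ as the vertex at rank $s$ in $\rho$, recover $u^*$ as its partner in \ranking$(\rho)$, and then recover $u$ as the optimal partner of $u^*$, from which $\sigma$ is forced. Your write-up is simply more explicit about the inversion $\sigma=\rho_u^t$, whereas the paper phrases it as $\rho=\sigma_u^{\rho(u)}=\overline{\sigma}_u^{\rho(u)}$.
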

\begin{proof}
Let $u'$ be the vertex in $\rho$ at rank $s$. Let $u^*$ be the vertex
to which $u'$ is matched by {\sc Ranking}. Then it is clear
from the definition of the map $f$ that $\rho = \sigma_u^{\rho(u)} =
\overline{\sigma}_u^{\rho(u)}$, implying $\sigma=\overline{\sigma}$.
\end{proof}

The claim proves that for a fixed $t$, the set-values $f(\sigma,
t)$ are disjoint for different $\sigma$. Therefore, 
$$ 1-x_t
\ =\ \frac{|R_t|}{n!}\ =\ \frac{1}{n}\cdot\frac{\left|\bigcup_{(\sigma,
    t)\in R_t}{f(\sigma,
    t)}\right|}{n!}\ \leq\ \frac{1}{n}\cdot\frac{\left|\bigcup_{s\leq
    t}{Q_s}\right|}{n!} \ =\ \frac{1}{n}\sum_{s\leq
  t}{\frac{|Q_s|}{n!}}\ =\ \frac{\sum_{s\leq t}{x_s}}{n}$$

Therefore, the probabilities $x_t$'s obey the equation
$1-x_t\ \leq\ \frac{1}{n}\sum_{s\leq t}{x_s}$ for all $t$. Since any vertex with
rank 1 in any of the random permutations will be matched, $x_1 =
1$. One can make simple arguments \cite{KVV90, BM08, GM08} to prove that
under these conditions, $\mathrm{ALG}_{G,\pi} = \sum_t{x_t} \geq
\left(1-\frac{1}{e}\right)n = \left(1-\frac{1}{e}\right) OPT$,
thereby proving Theorem~\ref{thm:kvv}.

\section{Proof Of Theorem~\ref{thm:main}}
\label{sec:main-proof}

In this section, we will assume that $|U| = |V| = n$ and that $G$ has
a perfect matching. In Appendix \ref{app1} we will show how this
assumption can be removed.

Recall that our algorithm works as follows: For each $u \in U$, let
$\sigma(u)$ be a number picked uniformly at random from $[0,1]$ (and
independent of other vertices) Now, when the next vertex $v \in V$
arrives, match it to the available neighbor $u$ with the maximum value
of $b_u\psi(\sigma(u))$, where $\psi(x) := 1 - e^{-(1-x)}$.

For ease of exposition, we will prove our result for a discrete
version of this algorithm. For every $u\in U$ we will choose a random
integer $\sigma(u)$ uniformly from $\{1, ..., k\}$ where $k$ is the
parameter of discretization. We will also replace the
function $\psi(x)$ by its discrete version $\psi(i) = 1 -
\left(1-\frac{1}{k}\right)^{-(k-i+1)}$. The discrete version of our
algorithm also matches each incoming vertex $v \in V$ to the available
neighbor $u$ with the maximum value of $b_u\psi(\sigma(u))$. Notice
that $\psi$ is a decreasing function, so $\psi(s) \geq \psi(t)$ if $s
\leq t$. As $k \rightarrow \infty$, the discrete version tends to our
original algorithm.\\

We begin with some definitions, followed by an overview of the proof.

%For each $u\in U$ we choose the random variables $\sigma(u)$ uniformly
%and independently from $\{1, ..., k\}$.
We will denote by $\sigma \in
[k]^n$, the set of these random choices. We will say that $u$ is at
\emph{position} $t$ in $\sigma$ if $\sigma(u) = t$. As a matter of
notation, we will say that position $s$ is \emph{lower} (resp. higher)
than $t$ if $s \leq t$ (resp. $s \geq t$).

\begin{definition}[$u$ is matched in $\sigma$]
We say that $u$ is matched in $\sigma$ if our algorithm matches it
when the overall choice of random positions happens to be
$\sigma$. 
\end{definition}

Let $y_{\sigma,t}$ be the indicator variable denoting that the vertex
at position $t$ is matched in $\sigma$. 

\begin{definition}[$Q_t$, $R_t$]
$Q_t$ is defined as the set of all occurrences of matched vertices in the probability space.
$$Q_t = \{(\sigma, t, u)\ :\ \mbox{$\sigma(u) = t$ and
  $y_{\sigma,t} = 1$}\}$$
Similarly, $R_t$ is defined as the set of all occurrences of unmatched vertices in the probability space.
$$R_t = \{(\sigma, t,
u)\ :\ \mbox{$\sigma(u) = t$ and $y_{\sigma,t} = 0$}\}$$
\end{definition}

Let $x_t$ be the \emph{expected gain} at $t$, over the random choice
of $\sigma$. Then,
\begin{equation}
\label{eq9}
x_t = \frac{\sum_{(\sigma, t, u)\in Q_t}{b_u}}{k^n}
\end{equation}

The expected gain of the algorithm is $\mathrm{ALG}_{G, \pi} =
\sum_tx_t$. Also note that the \emph{optimal gain} at any position
$t$ is $B = \frac{\mathrm{OPT}(G)}{k}$ since each vertex in $U$
appears at position $t$ with probability $1/k$ and is matched in the
optimal matching. Therefore,

\begin{equation}
\label{eq8}
B-x_t = \frac{\sum_{(\sigma, t, u)\in R_t}{b_u}}{k^n}
\end{equation}

\begin{definition}[$\sigma_u^i$]
For any $\sigma$, $\sigma_u^i \in [k]^n$ is obtained from $\sigma$ by
changing the position of $u$ to $i$, i.e. $\sigma_u^i(u) = i$ and
$\sigma_u^i(u') = \sigma(u')$ for all $u' \neq u$.
\end{definition}

\begin{observation}
For all $(\sigma, t, u) \in R_t$ and $1 \leq i \leq k$, our algorithm
matches $u^*$ to some $u' \in U$ in $\sigma_u^i$.
\end{observation}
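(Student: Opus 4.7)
The plan is to argue by contradiction: assume $u^*$ is unmatched in $\sigma_u^i$ and derive that $u$ must then be matched in $\sigma$, contradicting $(\sigma,t,u)\in R_t$. A preliminary fact about $\sigma$ itself is that $u^*$ is necessarily matched in $\sigma$: since $u$ is a neighbor of $u^*$ that remains unmatched throughout the run on $\sigma$, $u$ is available when $u^*$ arrives, so $u^*$ picks either $u$ (impossible since $u\in R_t$) or some other neighbor $w$ with $b_w\psi(\sigma(w))\ge b_u\psi(t)$. In particular the ``optimal'' partner $w\neq u$ exists.

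I would then split on the relation between $i$ and $t$. In the easy case $i\ge t$ we have $\psi(i)\le\psi(t)$, so $u$ is (weakly) less attractive in $\sigma_u^i$. A short induction on arrivals shows that the two runs on $\sigma$ and $\sigma_u^i$ make identical decisions at every step: at any purported divergence the only changed quantity is $u$'s own perturbed weight, so divergence would require $u$ to be the top-weight eligible neighbor in $\sigma$ and hence to be selected there~---~but this would match $u$ in $\sigma$, contradicting the hypothesis. Therefore $u$ also remains unmatched in $\sigma_u^i$, and when $u^*$ arrives it sees $u$ available and is matched, contradicting the contradiction hypothesis.

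In the harder case $i<t$, the value $\psi(i)>\psi(t)$ makes $u$ strictly more attractive in $\sigma_u^i$, and $u$ may now become matched. The plan here is to trace the cascade of re-routed decisions via a standard alternating-path invariant: writing $M_j(\sigma)$ and $M_j(\sigma_u^i)$ for the partial matchings after processing the $j$th arrival, I would prove by induction on $j$ that the symmetric difference $M_j(\sigma)\triangle M_j(\sigma_u^i)$ forms a single alternating path anchored at $u$ (which is free in $M(\sigma)$ but matched in $M(\sigma_u^i)$). Under the contradiction hypothesis, $u^*$ is matched in $\sigma$ (to $w$) but unmatched in $\sigma_u^i$, so it must be the other endpoint of this path; walking back from $u^*$ along the alternating $M(\sigma)$/$M(\sigma_u^i)$ edges and re-pairing as we go produces a matched edge incident to $u$ in $\sigma$, the desired contradiction.

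The main obstacle will be establishing the alternating-path invariant cleanly: a single change in $u$'s position can ripple through several subsequent arrivals, and one must carefully verify at each arrival that the symmetric difference stays a single path anchored at $u$ rather than splintering into multiple components or cycles. This bookkeeping is exactly the structural content of the more general lemma that follows the observation (the direct analog in this section of Lemma~\ref{lemma1} for \ranking), and the observation will drop out as an immediate corollary once that lemma is in hand.
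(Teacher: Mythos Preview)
Your overall plan---reducing the observation to the alternating-path lemma (Lemma~\ref{lemma2})---is exactly what the paper does, and your case $i\ge t$ is essentially the paper's Case~1 of that lemma. However, the contradiction you sketch for $i<t$ does not go through. You claim that if $u^*$ is unmatched in $\sigma_u^i$, then ``walking back from $u^*$ along the alternating $M(\sigma)/M(\sigma_u^i)$ edges and re-pairing as we go produces a matched edge incident to $u$ in $\sigma$.'' But re-pairing along an alternating path yields a \emph{different} matching, not the algorithm's output $M(\sigma)$; nothing here forces the algorithm's run on $\sigma$ to have matched $u$. More structurally: in the alternating path $u,v_1,u_1,v_2,\dots$ anchored at $u$, the very invariant you propose shows that every $V$-vertex $v_j$ on the path is matched in $M(\sigma_u^i)$ (namely to $u_{j-1}$, with $u_0:=u$). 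Hence a $V$-vertex can be an endpoint of this path only by being unmatched in $\sigma$ and matched in $\sigma_u^i$---never the reverse. So $u^*$, which you already showed is matched in $\sigma$, cannot be ``the other endpoint'' in the sense your contradiction requires.

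The fix is to drop the contradiction wrapper entirely: once the alternating-path invariant is established, the conclusion is immediate. Either $u^*\notin V'$ and it retains its $\sigma$-match in $\sigma_u^i$, or $u^*=v_j\in V'$ and it is matched to $u_{j-1}$ in $\sigma_u^i$. This is precisely how the paper's Lemma~\ref{lemma2} concludes, and the observation is then---as you correctly note in your final sentence---a direct corollary.
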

The above observation follows from Lemma~\ref{lemma2}. We'll use it
to define a map from bad events to good events as follows.

\begin{definition}[Charging Map $f(\sigma, t, u)$]
\label{def:map}
For every $(\sigma, t, u) \in R_t$, define the set-valued
map $$f(\sigma, t, u) = \{(\sigma_u^i, s, u')\ :\ \mbox{$1\leq i\leq
  k$, and the algorithm matches $u^*$ to $u'$ in $\sigma_u^i$ where
  $\sigma_u^i(u') = s$}\}$$
\end{definition}

\begin{observation}
\label{obs1}
If $(\rho, s, u') \in f(\sigma, t, u)$, then $(\rho, s, u') \in Q_s$.
\end{observation}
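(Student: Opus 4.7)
The plan is to prove Observation~\ref{obs1} by simply unpacking the two relevant definitions and matching them up; this is a bookkeeping statement rather than a substantive claim, so I do not expect any real obstacles.

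First I would unpack the definition of $f(\sigma, t, u)$. By Definition~\ref{def:map}, a triple $(\rho, s, u')$ in $f(\sigma, t, u)$ must arise as $\rho = \sigma_u^i$ for some $i \in \{1,\ldots,k\}$, with $\rho(u') = s$, and with the algorithm matching $u^*$ to $u'$ when the random positions are $\rho$. The existence of $u'$ (and hence of this triple at all) is guaranteed by the preceding Observation, which in turn follows from Lemma~\ref{lemma2}.

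Next I would unpack the definition of $Q_s$. A triple $(\rho, s, u')$ lies in $Q_s$ exactly when $\rho(u') = s$ and $y_{\rho, s} = 1$, i.e.\ the vertex at position $s$ in $\rho$ is matched by our algorithm on input $\rho$. The first condition, $\rho(u') = s$, is already part of the information guaranteed by membership in $f(\sigma, t, u)$.

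It therefore remains to verify $y_{\rho, s} = 1$. But membership in $f(\sigma, t, u)$ tells us that $\pgreedy$ on input $\rho$ matches $u^*$ to $u'$; in particular $u'$ is matched in $\rho$. Since $\rho(u') = s$, the vertex at position $s$ in $\rho$ is precisely $u'$, which is matched, so $y_{\rho, s} = 1$ by the definition of the indicator. This establishes $(\rho, s, u') \in Q_s$ and completes the proof.
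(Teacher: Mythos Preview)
Your proposal is correct; the paper does not supply a separate proof of this observation, treating it as immediate from the definitions of $f$ and $Q_s$, and your unpacking is exactly the intended justification. One trivial remark: since $\sigma \in [k]^n$, several vertices may share position $s$, so ``the vertex at position $s$ in $\rho$ is precisely $u'$'' should read ``$u'$ is a vertex at position $s$ in $\rho$''; this does not affect the argument.
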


Now we are ready to give an overview of the proof.

\subsection*{Overview of the proof}
The key idea in the analysis of {\sc Ranking} in Section
\ref{section:KVV} was that we can bound the number of occurrences of
unmatched vertices - the \emph{bad} events - in the entire probability
space by a careful count of the matched vertices - the \emph{good}
events. The charging map $f$ defined above is an attempt to do this.
We'll show in Lemma~\ref{lemma2} that if $(\sigma_u^i, s, u') \in
f(\sigma, t, u)$, then the scaled (by $\psi$) gain due to $u'$ in
$\sigma_u^i$ is no less than the scaled loss due to $u$ in
$\sigma$. However, $s$ may be higher or lower than $t$, unlike {\sc
  Ranking} where $s \leq t$. This implies that the bound is in terms
of events in $\bigcup_s Q_s$, $1 \leq s \leq k$, which is very weak (as
many of the events in the union are not used). 

One idea is to bound the sum of losses incurred at all positions,
thereby using almost all the events in $\bigcup_s Q_s$. However, if we do this,
then the charging map loses the disjointness property, i.e. if
$(\sigma, t, u) \in R_t$ and $(\sigma_u^i, i, u) \in R_i$ then $f$
value of both these occurrences is the same. Thus, each event in
$\bigcup_s Q_s$ gets charged several times (in fact a non-uniform number
of times), again making the bound weak. To this end, we introduce the idea of {\em marginal loss}~(Definition \ref{def:margin}), which helps us define a disjoint map and get a tight bound.

Next, we formalize the above.

\subsection*{Formal proof}

We begin by proving an analogue of Lemma~\ref{lemma1}.

\begin{lemma}
\label{lemma2}
If the vertex $u$ at position $t$ in $\sigma$ is unmatched by our algorithm, then for every $1 \leq i\leq k$, the algorithm matches $u^*$ in $\sigma_u^i$ to a vertex $u'$ such that $\psi(t)b_u \leq \psi\left(\sigma_u^i(u')\right)b_{u'}$.
\end{lemma}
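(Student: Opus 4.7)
The plan is to argue by contradiction, constructing a ``displacement chain'' of strictly improving matches that cannot fit in the finite vertex set $U$. This parallels the proof of Lemma~\ref{lemma1} in the unweighted case, but everything is now tracked in terms of \emph{scaled} weights $b_u\psi(\sigma(u))$. Suppose for contradiction that in $\sigma_u^i$ either $u^*$ is left unmatched or is matched to some $u'$ with $b_{u'}\psi(\sigma_u^i(u')) < b_u\psi(t)$; I would derive a contradiction with the hypothesis that $u$ is unmatched in $\sigma$.

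First I would anchor the chain. Since $u$ sits at position $t$ and is unmatched throughout $\sigma$, it is available when $u^*$ arrives in $\sigma$, so $u^*$ is matched to some $u_0$ with $b_{u_0}\psi(\sigma(u_0)) \geq b_u\psi(t)$, and necessarily $u_0 \neq u$ (else $u$ would be matched in $\sigma$). Because $u_0 \neq u$, its scaled weight is identical in the two runs $\sigma$ and $\sigma_u^i$. Hence, if $u_0$ were still available when $u^*$ arrives in $\sigma_u^i$, the algorithm would match $u^*$ to a vertex of scaled weight at least $b_u\psi(t)$, contradicting our assumption. So $u_0$ has already been matched in $\sigma_u^i$ to some online vertex $v_1$ arriving strictly before $u^*$.

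Next I would extend the chain inductively. Let $u_1$ be the partner of $v_1$ in $\sigma$; this partner exists because $u_0$ is still available to $v_1$ in $\sigma$ (it is matched only later, at time $u^*$). By the same style of reasoning: (a) $u_1 \neq u$; (b) $u_1 \neq u_0$, else $u_0$ would have two partners in $\sigma$; and (c) $u_1$ must be unavailable in $\sigma_u^i$ at time $v_1$. Point (c) follows because $u_0$ and $u_1$ have identical scaled weights across the two runs (neither is $u$), so consistent tie-breaking would force the algorithm to make the same choice between $u_0$ and $u_1$ in both runs if both were available in $\sigma_u^i$, contradicting that $v_1$ matches $u_1$ in $\sigma$ but $u_0$ in $\sigma_u^i$. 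Thus $u_1$ must have been matched in $\sigma_u^i$ to some still earlier $v_2$, and the construction continues with $(v_2, u_2), (v_3, u_3), \ldots$.

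The main obstacle is verifying that this inductive step goes through cleanly at every stage $j$: specifically, that $u_{j+1} \neq u$ (this is exactly where the hypothesis that $u$ is unmatched in $\sigma$ is consumed) and that $u_{j+1}$ is unavailable in $\sigma_u^i$ at time $v_{j+1}$ (which again uses the invariance of non-$u$ scaled weights across the two runs together with consistent tie-breaking). Granted these facts, one obtains an infinite sequence $v_0 = u^*, v_1, v_2, \ldots$ of online vertices with strictly decreasing arrival times together with pairwise distinct vertices $u_0, u_1, u_2, \ldots \in U \setminus \{u\}$ (distinctness: if $u_{j+1} = u_\ell$ for some $\ell \leq j$ then $u_\ell$ would be matched to both $v_\ell$ and $v_{j+1}$ in $\sigma$). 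This contradicts the finiteness of $U$, forcing the original assumption to fail and establishing the lemma.
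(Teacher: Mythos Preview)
Your argument is correct. The contradiction-by-displacement-chain works: at each stage $u_j$ is the $\sigma$-partner of $v_j$, which exists because $u_{j-1}$ (a neighbor of $v_j$, being its $\sigma_u^i$-partner) is still free in $\sigma$ at time $v_j$; $u_j\neq u$ since $u$ is unmatched in $\sigma$; and $u_j$ must be unavailable to $v_j$ in $\sigma_u^i$ because both $u_{j-1}$ and $u_j$ retain their scaled weights across the two runs and consistent tie-breaking would otherwise force $v_j$ to repeat its $\sigma$-choice. Strictly decreasing arrival times of the $v_j$ then force an infinite sequence of distinct $u_j\in U\setminus\{u\}$.

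The paper's proof proceeds differently. It splits on whether $i\geq t$ or $i<t$. For $i\geq t$ it shows by induction on arrivals that the entire matching is unchanged (since $u$'s scaled weight only drops and $u$ was never chosen anyway), so $u^*$ keeps its $\sigma$-partner. For $i<t$ it proves a structural statement: the symmetric difference of the two matchings is a single alternating path starting at $u$, by showing that the $j$-th vertex $v_j$ (in arrival order) with a changed match satisfies $m_{\sigma_u^i}(v_j)=m_\sigma(v_{j-1})$; it then reads off the inequality from two comparisons along this path. Your backward chain is essentially this same alternating path traversed from $u^*$ toward $u$, but packaged as a contradiction argument rather than as an explicit description of the symmetric difference. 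The trade-off: your version is case-free and establishes only what is needed, while the paper's version yields the stronger structural fact (the two matchings differ on exactly one path), which is reused elsewhere, e.g.\ for Observation~\ref{obs2}.
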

\begin{proof} 
\noindent{\textbf{Case 1 ($i \geq t$)}}: Let $v_1, ..., v_n$ be the order of arrival of vertices in
$V$. Clearly, $v_1$ will see the same choice of neighbors in
$\sigma_u^i$ as in $\sigma$, except the fact that the position of $u$
is higher in $\sigma_u^i$ than in $\sigma$. Since we did not match
$v_1$ to $u$ in $\sigma$, $v_1$ will retain its match from $\sigma$
even in $\sigma_u^i$. Now assuming that $v_1, ..., v_l$ all match the
same vertex in $\sigma_u^i$ as they did in $\sigma$, $v_{l+1}$ will
see the same choice of neighbors in $\sigma_u^i$ as in $\sigma$ with
the exception of $u$. Since $v_{l+1}$ did not match $u$ in $\sigma$
either, it will retain the same neighbor in $\sigma_u^i$ and by
induction every vertex from $V$, specifically $u^*$ keeps the same
match in $\sigma_u^i$ as in $\sigma$. Since $\sigma(u') =
\sigma_u^i(u')$, we conclude $\psi(t)b_u \leq
\psi\left(\sigma_u^i(u')\right)b_{u'}$.\\

\noindent{\textbf{Case 2 ($i < t$)}}: For a vertex $v \in V$, let $m_\sigma(v)$ and $m_{\sigma_u^i}(v)$ be
the vertices to which $v$ is matched in $\sigma$ and $\sigma_u^i$
respectively, if such a match exists and null otherwise. Intuitively,
since $\psi(i) \geq \psi(t)$, the scaling factor of $b_u$ only
improves in this case, while that of any other vertex in $U$ remains
the same. Therefore, we can expect $u$ to be more likely to be matched
in $\sigma_u^i$ and the
$\psi\left(\sigma_u^i\left(m_{\sigma_u^i}(v)\right)\right)b_{m_{\sigma_u^i}(v)}
\geq
\psi\left(\sigma\left(m_{\sigma}(v)\right)\right)b_{m_{\sigma}(v)}$ to
hold for all $v \in V$. In fact, something more specific is true. The
symmetric difference of the two matchings produced by the algorithm
for $\sigma$ and $\sigma_u^i$ is exactly one path starting at $u$ that
looks like $(u,\ v_1,\ m_\sigma(v_1),\ v_2,\ m_\sigma(v_2),\ ...)$,
where $(v_1, v_2, ...)$ appear in their order of arrival. In what
follows we prove this formally.

Let $V' = \{ v\in V\ :\ m_\sigma(v) \neq m_{\sigma_u^i}(v)\}$ be the
set of vertices in $V$ with different matches in $\sigma$ and
$\sigma_u^i$. Index the members of $V'$ as $v_1, ..., v_l$ in the same
order as their arrival, \textit{i.e.} $v_1$ arrives the earliest. For
simplicity, let $u_j = m_\sigma(v_j)$ and $w_j = m_{\sigma_u^i}(v_j)$.

We assert that the following invariant holds for $2 \leq j \leq l$:
Both $u_j$ and $u_{j-1}$ are unmatched in $\sigma_u^i$ when $v_j$
arrives and $v_j$ matches $u_{j-1}$, \textit{i.e.} $w_j = u_{j-1}$.

For base case, observe that the choice of neighbors for $v_1$ in
$\sigma_u^i$ is the same as in $\sigma$, except $u$, which has moved
to a lower position. Since by definition $v_1$ does not match $u_1$ in
$\sigma_u^i$, $w_1 = u$. Now consider the situation when $v_2$
arrives. All the vertices arriving before $v_2$ - with the exception
of $v_1$ - have been matched to the same vertex in $\sigma_u^i$ as in
$\sigma$, and $v_1$ has matched to $u$, leaving $u_1$ yet
unmatched. Let $U_\sigma(v_2)$ and $U_{\sigma_u^i}(v_2)$ be the sets
of unmatched neighbors of $v_2$ in $\sigma$ and $\sigma_u^i$
respectively \emph{at the moment} when $v_2$ arrives. Then from above
arguments, $U_{\sigma_u^i}(v_2) = \left(U_\sigma(v_2)\cup
\{u_1\}\right)-\{u\}$. Since $u$ was unmatched in $\sigma$, $u_2 \neq
u$. Since $v_2 \in V'$, $w_2 \neq u_2$. This is only possible if $w_2
= u_1$. And hence the base case is true.

Now assume that the statement holds for $j-1$ and consider the arrival
of $v_j$. By induction hypothesis, $v_1$ has been matched to $u$ and
$v_2, .., v_{j-1}$ have been matched to $u_1, ..., u_{j-2}$
respectively. All the other vertices arriving before $v_j$ that are
not in $V'$ have been matched to the same vertex in $\sigma_u^i$ as in
$\sigma$. Therefore, $u_{j-1}$ is yet unmatched. Let $U_\sigma(v_j)$
and $U_{\sigma_u^i}(v_j)$ be the sets of unmatched neighbors of $v_j$
in $\sigma$ and $\sigma_u^i$ respectively at the moment when $v_j$
arrives. Then from above arguments, $U_{\sigma_u^i}(v_j) =
\left(U_\sigma(v_j)\cup \{u_{j-1}\}\right)-\{u\}$. Since $u$ was
unmatched in $\sigma$, $u_j \neq u$. Given that $w_j \neq u_j$, the
only possibility is $w_j = u_{j-1}$. Hence the proof of the inductive
statement is complete.

If $u^* \notin V'$ then $u' = m_{\sigma_u^i}(u^*) = m_\sigma(u^*)$ and
the statement of the lemma clearly holds since $\sigma(u') =
\sigma_u^i(u')$. If $u^* = v_1$, then $u' = u$ and
$\psi\left(\sigma_u^i(u')\right)b_{u'} = \psi(i)b_u \geq \psi(t)b_u$
since $i < t$. Now suppose $u^* = v_j$ for some $j \geq 2$. Then $u' =
u_{j-1}$ and by the invariant above,
\begin{eqnarray}
\label{eq11} \psi\left(\sigma_u^i(u')\right)b_{u'}\ = \ \psi\left(\sigma_u^i(u_{j-1})\right)b_{u_{j-1}} & \geq & \psi\left(\sigma_u^i(u_j)\right)b_{u_j}\\
\label{eq12} & = & \psi\left(\sigma(u_j)\right)b_{u_j}\\
\label{eq13} & \geq & \psi(t)b_u
\end{eqnarray}

Equation \eqref{eq11} follows from the fact that $u^* = v_j$ was matched in
$\sigma_u^i$ to $u_{j-1}$ when $u_j$ was also unmatched. The fact that
only $u$ changes its position between $\sigma$ and $\sigma_u^i$ leads us
to \eqref{eq12}. Finally, equation \eqref{eq13} follows from the fact
that $u^*$ was matched to $u_j$ in $\sigma$ when $u$ was also
unmatched.
\end{proof}

% To overcome this issue, one may observe that the instead of
%writing separate equations bounding the bad events at each position as
%we did in case of {\sc Ranking}, one could apply Lemma \ref{lemma2}
%together for all positions to construct one map from bad events to
%good events, yielding a single equation.

Using the above lemma, we get the following easy observation.
\begin{observation}
\label{obsk}
For all $(\sigma, t, u) \in R_t$, $1 \leq t \leq k$, $f(\sigma, t, u)$
contains $k$ values.
\end{observation}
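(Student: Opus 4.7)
The plan is to verify two things: first, that for each of the $k$ choices of $i \in \{1, \dots, k\}$, the definition of $f(\sigma, t, u)$ actually produces a well-defined triple; second, that these $k$ triples are pairwise distinct, so that the set $f(\sigma, t, u)$ has cardinality exactly $k$.

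For the first point, the existence of a valid triple for every $i$ is exactly what Lemma~\ref{lemma2} gives us. Since $(\sigma, t, u) \in R_t$ means $u$ is unmatched by the algorithm on input $\sigma$ at position $t$, Lemma~\ref{lemma2} guarantees that for every $1 \leq i \leq k$, the algorithm running on $\sigma_u^i$ matches $u^*$ to some neighbor $u' \in U$ (in particular, $u^*$ is matched at all). Letting $s = \sigma_u^i(u')$, the triple $(\sigma_u^i, s, u')$ is thus well-defined and lies in $f(\sigma, t, u)$. So $f(\sigma, t, u)$ contains at least one element for each of the $k$ choices of $i$.

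For the second point, observe that the first coordinate $\sigma_u^i$ alone already distinguishes the triples: by the definition of $\sigma_u^i$ we have $\sigma_u^i(u) = i$, so if $i \neq i'$ then $\sigma_u^i(u) \neq \sigma_u^{i'}(u)$ and hence $\sigma_u^i \neq \sigma_u^{i'}$. Therefore the $k$ triples produced as $i$ ranges over $\{1, \dots, k\}$ are all distinct, and $|f(\sigma, t, u)| = k$ as claimed.

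There isn't really a hard step here: once Lemma~\ref{lemma2} is in hand, the whole content of the observation is that the assignment $i \mapsto \sigma_u^i$ is injective, which is immediate from the definition of $\sigma_u^i$. The observation is essentially a bookkeeping statement that will be used later to translate bounds on $|f(\sigma, t, u)|$ into bounds involving the parameter $k$ (e.g., when summing over events and normalizing by $k^n$ to turn counts into probabilities).
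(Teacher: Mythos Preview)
Your proof is correct and matches the paper's approach: the paper simply remarks that the observation follows from Lemma~\ref{lemma2}, and you have filled in precisely the details that justify this---namely, that Lemma~\ref{lemma2} guarantees $u^*$ is matched in every $\sigma_u^i$, and that the map $i \mapsto \sigma_u^i$ is injective so the resulting triples are distinct. The paper leaves the injectivity implicit, but your explicit argument for it is exactly right.
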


\textbf{Remark}: As noted in the overview, although Lemma \ref{lemma2}
looks very similar to Lemma \ref{lemma1}, it is not sufficient to get
the result, since the good events pointed to by Lemma \ref{lemma2} are
scattered among all positions $1 \leq s \leq k$ -- in contrast to
Lemma \ref{lemma1}, which pointed to only lower positions $s\leq t$,
giving too weak a bound. We try to fix this by combining the losses
from all $R_t$. However we run into another difficulty in doing
so. While for any fixed $t$, the maps $f(\sigma, t, u)$ are disjoint
for all $(\sigma, t, u) \in R_t$, but the maps for two occurrences in
different $R_t$s may not be disjoint.
%We run into a difficulty trying to prove the next logical step - the
%disjointness of $f(\sigma, t, u)$ for all $(\sigma, t, u) \in R_t$ for
%$1 \leq t \leq k$. 
In fact, whenever some $u$ is unmatched in $\sigma$ at position
$t$, it will also remain unmatched in $\sigma_u^j$ for
$j > t$, and the sets $f(\sigma, t, u)$ and $f(\sigma_u^j, j, u)$ will
be exactly the same! This situation is depicted in Figure \ref{fig:margins} in Appendix \ref{app8}.

This absence of disjointness again renders the bound too weak. To fix this, we carefully select a subset of bad events
from $\bigcup_t{R_t}$ such that their set-functions are indeed
disjoint, while at the same time, the total gain/loss can be easily
expressed in terms of the bad events in this subset.
   
\begin{definition}[Marginal loss events $S_t$] 
\label{def:margin}
Let $S_t = \{(\sigma, t, u) \in R_t: (\sigma_u^{t-1},
t-1, u) \notin R_{t-1}\}$, where $R_0 = \emptyset$.
\end{definition}

Informally, $S_t$ consists of \emph{marginal} losses. If $u$ is
unmatched at position $t$ in $\sigma$, but matched at position $t-1$
in $\sigma_u^{t-1}$, then $(\sigma, t, u) \in S_t$ (See Figure
\ref{fig:margins} in Appendix \ref{app8}). The following property can be proved using the same arguments
as in Case 1 in the proof of Lemma \ref{lemma2}.

\begin{observation}
\label{obs2}
For $(\sigma, t, u) \in S_t$, $u$ is matched at $i$ in $\sigma_u^i$ if and only if $i < t$.
\end{observation}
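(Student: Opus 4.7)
The plan is to reduce the claim to a monotonicity fact implicit in Case~1 of Lemma~\ref{lemma2}. That case established the following: if $u$ is at position $t$ in some configuration $\rho$ and is unmatched by the algorithm on $\rho$, then pushing $u$ to any weakly higher position $i \geq t$ (which only lowers $\psi(\cdot) b_u$) leaves the entire execution of the algorithm unchanged; in particular $u$ remains unmatched in $\rho_u^i$. I would first isolate this as a stand-alone monotonicity statement, noting that the inductive argument from Case~1 only uses the fact that no arriving $v$ chose $u$ in $\rho$ and that $u$ is strictly less attractive at position $i \geq t$, so it applies verbatim to an arbitrary starting configuration $\rho$, not just to the $\sigma$ of the lemma.

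For the ``$\Rightarrow$'' direction, i.e.\ $i < t$ implies $u$ matched at $i$ in $\sigma_u^i$, I would argue by contrapositive. Suppose $u$ is unmatched in $\sigma_u^i$ at position $i$. Apply the monotonicity fact with $\rho = \sigma_u^i$ and new position $t-1$, which is $\geq i$ because $i \leq t-1$. This yields that $u$ is unmatched in $(\sigma_u^i)_u^{t-1} = \sigma_u^{t-1}$, contradicting the defining property of $S_t$, namely $(\sigma_u^{t-1}, t-1, u) \notin R_{t-1}$.

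For the ``$\Leftarrow$'' direction, i.e.\ $u$ matched at $i$ in $\sigma_u^i$ implies $i < t$, I would again apply the monotonicity fact, this time with $\rho = \sigma$: since $(\sigma, t, u) \in S_t \subseteq R_t$, the vertex $u$ is unmatched at position $t$ in $\sigma = \sigma_u^t$, so $u$ is unmatched in $\sigma_u^i$ for every $i \geq t$. Contrapositively, if $u$ is matched in $\sigma_u^i$, then $i < t$.

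I do not expect any real obstacle beyond bookkeeping. The one point to be careful about is making the monotonicity lemma genuinely configuration-agnostic, so that it can be applied both to $\sigma$ (for the ``$\Leftarrow$'' direction) and to $\sigma_u^i$ (for the ``$\Rightarrow$'' direction); this is straightforward because Case~1 of Lemma~\ref{lemma2} is proved by induction on the arrival order of $V$ and never uses any property of $\sigma$ other than that $u$ is unmatched at its current position.
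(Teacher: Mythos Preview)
Your proposal is correct and follows essentially the same approach as the paper, which simply says the observation ``can be proved using the same arguments as in Case~1 in the proof of Lemma~\ref{lemma2}.'' You have merely spelled out in detail how that monotonicity argument yields both directions of the biconditional (note, though, that your ``$\Rightarrow$'' and ``$\Leftarrow$'' labels are swapped relative to the usual convention for ``$A$ iff $B$'').
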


\begin{definition}[Expected Marginal Loss $\alpha_t$]
\begin{equation}
\label{eq1}
\mbox{Expected marginal loss at position $t$}\ =\ \alpha_t \ =\ \frac{\sum_{(\sigma, t, u)\in S_t}b_u}{k^n}
\end{equation}
\end{definition}

\begin{claim}
\label{claim2}

\begin{equation}
\label{eq7}
\mbox{$\forall\ t$:}\ \ \ \ \ x_t\ =\ B - \sum_{s\leq t}\alpha_s
\end{equation}
\begin{equation}
\label{eq6}
\mbox{Total loss}\ =\ \sum_t{(B-x_t)}\ =\ \sum_t{(k-t+1)\alpha_t}
\end{equation}
\end{claim}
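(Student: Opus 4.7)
The plan is to establish a weight-preserving bijection between $R_t$ and $\bigsqcup_{s \leq t} S_s$, from which both equations of Claim~\ref{claim2} follow by a simple manipulation. Concretely, I will show that for every fixed $u \in U$ and configuration $\sigma$, varying only the position of $u$ exposes a threshold $\tau^{*} = \tau^{*}(\sigma, u)$ such that $u$ is matched at position $i$ in $\sigma_u^i$ if and only if $i < \tau^{*}$. Once this monotone structure is in place, the marginal loss events in $S_{\tau^{*}}$ serve as the representatives of entire chains of unmatched events produced by placing $u$ at positions $\tau^{*}, \tau^{*}+1, \ldots, k$.

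First I would record the following monotonicity, which is essentially Case~1 of the proof of Lemma~\ref{lemma2}: if $(\sigma, i, u) \in R_i$, then $(\sigma_u^{i'}, i', u) \in R_{i'}$ for every $i' \geq i$, because moving $u$ to a higher (less advantageous) position leaves the match of every vertex in $V$ unchanged, so $u$ itself remains unmatched. Hence for each fixed $\sigma$ and $u$ there is a unique threshold $\tau^{*}(\sigma, u) \in \{1, \ldots, k+1\}$ such that $u$ is matched in $\sigma_u^i$ exactly for $i < \tau^{*}(\sigma, u)$; Observation~\ref{obs2} is then just the specialization of this statement to triples already in $S_{\tau^{*}}$.

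Next I would define the map $\Phi(\sigma, t, u) := (\sigma_u^{\tau^{*}}, \tau^{*}, u)$ on $R_t$, where $\tau^{*} = \tau^{*}(\sigma, u)$, and verify two things: (i) $\Phi(\sigma, t, u) \in S_{\tau^{*}}$ with $\tau^{*} \leq t$, where the boundary case $\tau^{*} = 1$ uses the convention $R_0 = \emptyset$; and (ii) $\Phi$ is a bijection from $R_t$ onto $\bigsqcup_{s \leq t} S_s$, with inverse $(\rho, s, u) \mapsto (\rho_u^t, t, u)$, which lands in $R_t$ precisely because $t \geq s$ and the monotonicity above forces $u$ to be unmatched at position $t$ in $\rho_u^t$. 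Since $\Phi$ preserves the third coordinate $u$, and hence $b_u$, summing weights and dividing by $k^n$ gives
\begin{equation*}
B - x_t \;=\; \frac{\sum_{(\sigma, t, u)\in R_t} b_u}{k^n} \;=\; \sum_{s \leq t}\frac{\sum_{(\rho, s, u)\in S_s} b_u}{k^n} \;=\; \sum_{s \leq t} \alpha_s,
\end{equation*}
which is exactly \eqref{eq7} after rearrangement.

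Finally, \eqref{eq6} follows by summing \eqref{eq7} over $t$ and swapping the order of summation: each $\alpha_s$ appears in the inner sum for every $t \in \{s, s+1, \ldots, k\}$, contributing a total coefficient of $k - s + 1$. The only subtle point of the whole argument is checking that $\Phi$ is well-defined and bijective --- that the $S_s$'s together with the monotonicity capture every unmatched event exactly once --- but given Observation~\ref{obs2} and the monotonicity inherited from Case~1 of Lemma~\ref{lemma2}, this reduces to a short bookkeeping check.
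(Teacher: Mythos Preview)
Your proposal is correct and is essentially the same argument as the paper's: your map $\Phi$ coincides with the paper's map $g$, sending $(\sigma,t,u)\in R_t$ to $(\sigma_u^{i},i,u)$ where $i$ is the least position at which $u$ is unmatched. The only differences are cosmetic: you make the monotonicity/threshold structure explicit and supply the inverse map to verify bijectivity (the paper only checks injectivity and leaves surjectivity implicit), but the core idea and the resulting computation are identical.
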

\begin{proof}
To prove equation \eqref{eq7}, we will fix a $t$ and construct a
one-to-one map $g: R_t \rightarrow \bigcup_{s\leq t}S_t$. Given
$(\sigma, t, u) \in R_t$, let $i$ be the lowest position of $u$ such
that $u$ remains unmatched in $\sigma_u^i$. By observation \ref{obs2},
$i$ is unique for $(\sigma, t, u)$. We let $g(\sigma, t, u) =
(\sigma_u^i, i, u)$. Clearly, $(\sigma_u^i, i, u) \in S_i$. To prove that the map is one-to-one, suppose $(\rho, s, u) = g(\sigma, t, u) =
g(\overline{\sigma}, t, u)$. Then by definition of $g$, $\rho =
\sigma_u^s = \overline{\sigma}_u^s$ which is only possible if $\sigma
= \overline{\sigma}$. Therefore, $|R_t| = \bigcup_{s\leq t}S_t$.

Lastly, observe that $g$ maps an element of $R_t$ corresponding to the
vertex $u$ being unmatched, to an element of $S_i$ corresponding to
the same vertex $u$ being unmatched. From equation \eqref{eq8},

$$B-x_t \ = \ \frac{\sum_{(\sigma, t, u)\in R_t}{b_u}}{k^n} \ = \ \sum_{i\leq t}\frac{\sum_{(\sigma_u^i, i, u)\in S_i}{b_u}}{k^n} \ = \ \sum_{i\leq t}\alpha_i$$

This proves equation \eqref{eq7}. Summing \eqref{eq7} for all $t$, we get \eqref{eq6}.
\end{proof}

Now consider the same set-valued map $f$ from Definition \ref{def:map}, but restricted only to the members of $\bigcup_tS_t$. We have:

\begin{claim}
\label{claim3}
For $(\sigma, t, u) \in S_t$ and $(\overline{\sigma}, \overline{t},
\overline{u}) \in S_{\overline{t}}$, if $(\rho, s, u') \in f(\sigma,
t, u)$ and $(\rho, s, u') \in f(\overline{\sigma}, \overline{t},
\overline{u})$ then $\sigma =\overline{\sigma}$, $t = \overline{t}$
and $u = \overline{u}$.
\end{claim}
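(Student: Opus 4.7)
The plan is to prove injectivity of the restriction of $f$ to $\bigcup_t S_t$ by exhibiting an explicit recovery procedure: starting from $(\rho, s, u')$ in the image, I will recover $u$, then $t$, then $\sigma$, in that order.

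First I would recover $u$. By the definition of $f$, when the algorithm is run on the position-assignment $\rho$, the vertex $u'$ sits at position $s$ and is matched to $u^*$. Since the algorithm is deterministic given $\rho$, simulating it on $\rho$ pinpoints the unique $v \in V$ to which $u'$ is matched, and this $v$ must be $u^*$. Because $M^*(G)$ is a fixed perfect matching, the assignment $u \mapsto u^*$ is a bijection on $U$, so $u$ is uniquely determined as the offline partner of $v$. (The fact that $u'$ really is matched in $\rho$ is guaranteed by the definition of $f$, or equivalently by Observation~\ref{obs1}.)

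Next I would recover $t$. Write $\rho = \sigma_u^i$ for some $i \in [k]$; equivalently $\sigma = \rho_u^t$, and for every $j$ we have $\sigma_u^j = (\rho_u^t)_u^j = \rho_u^j$. Now $(\sigma, t, u) \in S_t$, so by Observation~\ref{obs2} the vertex $u$ is matched in $\sigma_u^j = \rho_u^j$ if and only if $j < t$. In particular, $t$ is characterized as the smallest $j$ such that $u$ is unmatched in $\rho_u^j$, a quantity that depends only on $\rho$ and $u$. Once $u$ and $t$ are fixed, $\sigma = \rho_u^t$ is fixed as well, so $(\sigma, t, u) = (\overline{\sigma}, \overline{t}, \overline{u})$, completing the injectivity argument.

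The main obstacle — and the reason for introducing the refinement to $\bigcup_t S_t$ in the first place — is exactly the uniqueness of $t$: without restricting to marginal-loss events, different triples $(\sigma, t, u)$ and $(\sigma_u^j, j, u)$ with $j > t$ would all map to overlapping images under $f$ (as highlighted in the remark preceding Definition~\ref{def:margin}). Restricting to $S_t$ forces the sharp threshold behavior in Observation~\ref{obs2} (matched strictly below $t$, unmatched from $t$ onward), and this sharp threshold is precisely what allows $t$ to be read off deterministically from $(\rho, u)$.
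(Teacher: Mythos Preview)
Your proof is correct and follows essentially the same approach as the paper's: first identify $u$ from the fact that $u'$ is matched to $u^*$ in $\rho$, then use the threshold property of $S_t$ (your invocation of Observation~\ref{obs2}; the paper's ``there exists a unique $i$ for which $(\sigma_u^i,i,u)\in\bigcup_t S_t$'') to pin down $t$, and finally recover $\sigma=\rho_u^t$. The only difference is presentational---you phrase the argument as an explicit recovery procedure, whereas the paper argues directly that the two preimages must coincide.
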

\begin{proof}
If $u'$ is matched to $v$ in $\rho$ then by definition of $f$, $v =
u^* = \overline{u}^*$, implying $u = \overline{u}$. Therefore, $\rho =
\sigma_u^i = \overline{\sigma}_u^i$ for some $i$. But this implies
that $\overline{\sigma} = \sigma_u^j$ for some $j$. This is only
possible for $j = t$ since by definition, if $u$ is unmatched in
$\sigma$ at $t$, then there exists a unique $i$ for which
$(\sigma_u^i, i, u) \in \bigcup_t{S_t}$. If $j = t$, then $\sigma =
\overline{\sigma}$ and $t = \overline{t}$.
\end{proof}

Armed with this disjointness property, we can now prove our main theorem. 

\begin{theorem}
As $k \rightarrow \infty$,
\begin{equation}
\sum_tx_t\ \geq \ \left(1-\frac{1}{e}\right)\mathrm{OPT}(G)
\end{equation}
\end{theorem}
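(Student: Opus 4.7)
The plan is to combine the scaled-loss bound of Lemma~\ref{lemma2} with the disjointness of Claim~\ref{claim3} into a single weighted inequality between the sequences $(\alpha_t)$ and $(x_t)$, and then solve it in the continuous limit $k\to\infty$ by exploiting the identity satisfied by $\psi$. For every $(\sigma, t, u) \in S_t$, Observation~\ref{obsk} says $f(\sigma, t, u)$ has $k$ elements; by Lemma~\ref{lemma2} each image $(\rho, s, u')$ contributes $\psi(s)\,b_{u'} \geq \psi(t)\,b_u$; and by Claim~\ref{claim3} together with Observation~\ref{obs1}, the images over all triples in $\bigcup_t S_t$ form a disjoint subfamily of $\bigcup_s Q_s$. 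Aggregating scaled gains over this disjoint family, comparing with the total scaled gain, and normalizing by $k^n$ yields the key inequality
$$k\sum_t \psi(t)\,\alpha_t \;\leq\; \sum_t \psi(t)\,x_t.$$

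\textbf{Solving in the continuous limit.} By Claim~\ref{claim2}, $\alpha_t = x_{t-1} - x_t$ with $x_0 = B$. Set $\tau = t/k$ and let $X(\tau)$ denote the limit of $x_{\lceil k\tau\rceil}$, so $X(0) = B$ and $k\alpha_t \to -X'(\tau)$. In the limit $k \to \infty$ the key inequality becomes
$$-\int_0^1 \psi(\tau)\,X'(\tau)\,d\tau \;\leq\; \int_0^1 \psi(\tau)\,X(\tau)\,d\tau.$$
Integrating by parts on the left and rearranging gives
$$\psi(0)\,X(0) - \psi(1)\,X(1) \;\leq\; \int_0^1 \bigl(\psi(\tau) - \psi'(\tau)\bigr)\,X(\tau)\,d\tau.$$
The function $\psi(\tau) = 1 - e^{-(1-\tau)}$ satisfies $\psi - \psi' \equiv 1$, $\psi(0) = 1 - 1/e$, and $\psi(1) = 0$. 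Substituting, the right-hand side collapses to $\int_0^1 X(\tau)\,d\tau$ and the left-hand side to $(1 - 1/e)\,B$, yielding $\int_0^1 X(\tau)\,d\tau \geq (1 - 1/e)\,B$. Multiplying through by $k$ and using $kB = \mathrm{OPT}(G)$ gives the desired $\sum_t x_t \geq (1 - 1/e)\,\mathrm{OPT}(G)$.

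\textbf{Main obstacle.} The whole argument hinges on the first step: every unmatched occurrence must be charged to $k$ distinct witnesses without any witness being charged twice. This is exactly why one must restrict attention to the marginal-loss events $S_t$ rather than all of $\bigcup_t R_t$ — on the latter, the map $f$ is badly non-disjoint, as flagged in the discussion preceding Definition~\ref{def:margin}. Once the weighted inequality is established, the rest is essentially forced: the differential identity $\psi - \psi' \equiv 1$ together with the specified boundary values is precisely what produces the constant $1 - 1/e$, which retrospectively justifies the particular choice $\psi(x) = 1 - e^{-(1-x)}$.
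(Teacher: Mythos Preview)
Your proof is correct and follows essentially the paper's route: you derive the same key inequality $k\sum_t\psi(t)\alpha_t \le \sum_t\psi(t)x_t$ from Lemma~\ref{lemma2}, Claim~\ref{claim3} and Observations~\ref{obs1} and~\ref{obsk}, and then finish by an integration-by-parts computation in the continuous limit, whereas the paper performs the dual discrete rearrangement (substituting $x_t = B - \sum_{s\le t}\alpha_s$ and using $\psi(t)+\tfrac{1}{k}\sum_{s\ge t}\psi(s)\ge (k-t+1)/k$, the discrete form of your $\psi-\psi'\equiv 1$, to bound the total loss by $\mathrm{OPT}(G)/e$). One small caveat: positing a limit function $X(\tau)$ with a derivative is not rigorously justified as written, since the $x_t$'s depend on $k$ and need not converge pointwise; this is easily repaired by carrying out Abel summation in the discrete setting, where $\psi(t)+k(\psi(t)-\psi(t+1))=1+O(1/k)$ plays the role of your differential identity.
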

\begin{proof}
Using Lemma~\ref{lemma2} and Observation~\ref{obsk}, we have for every
$(\sigma, t, u) \in S_t$,

\begin{equation}
\label{eq10}
\psi(t)b_u\ \leq\ \frac{1}{k}\displaystyle\sum_{(\sigma_u^i, s, u')\in f(\sigma, t, u)}\psi(s)b_{u'}
\end{equation}

If we add the equation \eqref{eq10} for all $(\sigma, t, u) \in S_t$ and for all $1\leq t\leq n$, then using Claim \ref{claim3} and Observation \ref{obs1}, we arrive at

\begin{eqnarray}
\nonumber \sum_t\psi(t)\frac{\sum_{(\sigma, t, u)\in S_t}b_u}{k^n} & \leq & \frac{1}{k}\sum_t\psi(t)\frac{\sum_{(\sigma, t, u)\in Q_t}{b_u}}{k^n}\\
\label{eq14} \sum_t\psi(t)\alpha_t & \leq & \frac{1}{k}\sum_t\psi(t)x_t\\
\label{eq15} & = & \frac{1}{k}\sum_t\psi(t)\left(B - \sum_{s\leq t}\alpha_s\right)
\end{eqnarray}

Equation \eqref{eq14} follows from \eqref{eq1} and
\eqref{eq9}. Equation \eqref{eq15} uses Claim \ref{claim2}.

We now rearrange terms to get 
\begin{equation}
\label{eq18}
\sum_t\alpha_t\left(\psi(t) + \frac{\sum_{s\geq t}\psi(s)}{k}\right) \ \leq \ \frac{B}{k}\sum_{t}\psi(t)
\end{equation}

When $\psi(t) = 1-\left(1-\frac{1}{k}\right)^{k-t+1}$, observe that
$\psi(t) + \frac{\sum_{s\geq t}\psi(s)}{k} \geq \frac{(k-t+1)}{k}$ and
$\sum_{t}\psi(t) = \frac{k}{e}$ as $k \rightarrow \infty$. Using
Claim~\ref{claim2},

\begin{eqnarray}
\nonumber \mbox{Total loss} & = & \sum_t{(B-x_t)} \ = \ \sum_{t}(k-t+1)\alpha_t\\
\nonumber & \leq & k\sum_t\alpha_t\left(\psi(t) + \frac{\sum_{s\geq t}\psi(s)}{k}\right)\\
\nonumber & \leq & B\sum_{t}\psi(t)\\
\nonumber & = & \frac{kB}{e} ~~~~~~~~~~~~~ \mbox{as} ~ k \rightarrow \infty\\
\nonumber & = & \frac{\mathrm{OPT}(G)}{e}
\end{eqnarray}

Hence, as $k \rightarrow \infty$, $$\mbox{Total gain}\ \geq\ \left(1-\frac{1}{e}\right)\mathrm{OPT}(G)$$ 

\noindent \textbf{Remark}: Observe that we substituted for $\psi(t)$ only after equation \eqref{eq18} - up until that point, any choice of a non-increasing function $\psi$ would have carried the analysis through. In fact, the chosen form of $\psi$ is a result of trying to reduce the left hand side of equation \eqref{eq18} to the expected total loss. To conclude, the `right' perturbation function is dictated by the analysis and not vice versa.

\end{proof}

\bibliographystyle{alpha}
\bibliography{weighted_matching}

\appendix

\section{The Reduction from Online Budgeted Allocation with Single Bids}
\label{app7}

In this section, we will show that the \emph{single bids} case of the online budgeted allocation problem reduces to online vertex-weighted bipartite matching. Let us first define these problems.\\

\noindent \textbf{\textsc{Online budgeted allocation}}: We have $n$ agents and $m$ items. Each agent $i$ specifies a monetary budget $B_i$ and a bid $b_{ij}$ for each item $j$. Items arrive online, and must be immediately allocated to an agent. If a set $S$ of items is allocated to agent $i$, then the agent pays the minimum of $B_i$ and  $\sum_{j\in S}b_{ij}$. The objective is to maximize the total revenue of the algorithm.\\

\noindent \textbf{\textsc{Single bids case}}: Any bid made by agent $i$ can take only two values: $b_i$ or 0. In other words, all the non-zero bids of an agent are equal.

\begin{claim}
Online budgeted allocation with single bids reduces to online vertex-weighted bipartite matching.
\end{claim}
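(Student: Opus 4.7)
The plan is to construct a vertex-weighted matching instance from any single-bids allocation instance so that (i) the optimal matching weight equals the optimal allocation revenue, and (ii) any online matching can be interpreted online as an allocation of at least the same value. Given budgets $B_i$ and bids $b_i$, set $c_i = \lfloor B_i / b_i \rfloor$ and $r_i = B_i - c_i b_i$. For each agent $i$, I would create $c_i$ copies of a vertex $u_i \in U$ of weight $b_i$, plus one additional copy of weight $r_i$ whenever $r_i > 0$. For each item $j$, create a vertex $v_j \in V$ that arrives online exactly when $j$ arrives, and place an edge between $v_j$ and every copy of $u_i$ iff $b_{ij} = b_i$. All copies, weights, and capacities depend only on the agents, so $U$ is known up front as required by the matching model, and the edges incident to $v_j$ depend only on item $j$'s bids, so they are revealed at $j$'s arrival as required.

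First, I would show $\mathrm{OPT}_{\text{match}} = \mathrm{OPT}_{\text{alloc}}$. In the forward direction, take an optimal allocation that assigns $k_i^*$ items to agent $i$ and match those items to copies of $u_i$, using the $b_i$-weight copies before the $r_i$-weight copy. If $k_i^* \leq c_i$, the matching collects $k_i^* b_i = \min(B_i, k_i^* b_i)$; if $k_i^* \geq c_i + 1$, it collects $c_i b_i + r_i = B_i = \min(B_i, k_i^* b_i)$. In the reverse direction, any matching induces an allocation (assign $j$ to $i$ whenever $v_j$ is matched to a copy of $u_i$); if $k_i$ copies of $u_i$ are matched, the matching weight from those copies is at most $\min(k_i b_i, B_i)$ since the total weight of all copies of $u_i$ is exactly $B_i$, while the allocation earns exactly $\min(k_i b_i, B_i)$. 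This simultaneously gives $\mathrm{OPT}_{\text{match}} \geq \mathrm{OPT}_{\text{alloc}}$ (from the first direction) and $\mathrm{OPT}_{\text{match}} \leq \mathrm{OPT}_{\text{alloc}}$ (from the second, applied to the optimal matching).

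Next, given any online algorithm for vertex-weighted matching, I would run it on the constructed graph and convert its output to an allocation by the same rule as above. The second observation of the previous paragraph shows the resulting allocation revenue is at least the matching value. Combined with $\mathrm{OPT}_{\text{match}} = \mathrm{OPT}_{\text{alloc}}$, a $c$-competitive matching algorithm becomes a $c$-competitive algorithm for the single-bids allocation problem, proving the reduction and giving the claimed $(1-1/e)$-competitive ratio for single-bids allocation as a corollary.

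The only subtle point is the remainder $r_i$ when $b_i$ does not divide $B_i$; the extra weight-$r_i$ copy is what makes the reduction exact rather than off by up to $b_i$ per agent. Beyond this, the proof is essentially a bookkeeping correspondence between items and $V$, and between agents (with capacity expansions via copies, as already used in the paper to handle general capacities) and $U$.
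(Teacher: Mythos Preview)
Your proposal is correct and follows essentially the same approach as the paper: the same copy construction (with $c_i=\lfloor B_i/b_i\rfloor$ weight-$b_i$ copies plus one weight-$r_i$ copy when $r_i>0$), the same allocation-to-matching and matching-to-allocation translations, and the same chain of inequalities yielding the $(1-1/e)$ corollary. The only cosmetic difference is that you additionally note $\mathrm{OPT}_{\text{match}}=\mathrm{OPT}_{\text{alloc}}$ (by applying the matching-to-allocation direction to the optimum), whereas the paper only states the one-sided $\mathrm{OPT}_M\ge\mathrm{OPT}_A$ it needs; both arguments are otherwise identical.
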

\begin{proof}
Given an instance of online budgeted allocation where agent $i$ has budget $B_i$ and single bid value $b_i$, we will construct an input instance $G(U, V, E, \{b_u\}_{u\in U})$ of online vertex-weighted bipartite matching. The set $V$ consists of one vertex corresponding to every item. The set $U$ will contain one or more vertices for every agent.

For every agent $i$, let $n_i$ be the largest integer such that $n_ib_i \leq B_i$ and let $r_i = B_i - n_ib_i$. Clearly, $r_i < b_i$. We will construct a set $U_i$ of $n_i$ vertices, each with weight $b_i$. In addition, if $r_i > 0$, then we will construct a vertex $\bar{u}_i$ with weight $r_i$ and add it to $U_i$. For all $u \in U_i$ and $v \in V$, the edge $uv \in E$ if and only if agent $i$ makes a non-zero bid on the item corresponding to $v$.\\

\noindent (1) Given a solution to the budgeted allocation problem where a set $S_i$ of items is allocated to agent $i$, let us see how to construct a solution to the vertex-weighted matching problem with the same total value. 
\begin{itemize}
\item If agent $i$ pays a total of $|S_i|\cdot b_i$, then we know that $|S_i| \leq n_i$. Hence, for every item in $S_i$, we will match the corresponding vertex in $V$ to a vertex in $U_i - \{\bar{u}_i\}$. Let $R_i$ be the set of vertices in $U_i$ thus matched. We have: $$\sum_{u \in R_i}b_u\ =\ |R_i|\cdot b_i\ =\ |S_i|\cdot b_i$$

\item If agent $i$ pays a total amount strictly less than $|S_i|\cdot b_i$, then we know that: (a) $|S_i| \geq n_i+1$, (b) $r_i > 0$ and (3) agent $i$ pays the budget $B_i$. We can now choose any $n_i+1$ items in $S_i$ and match the corresponding vertices in $V$ to the $n_i+1$ vertices in $U_i$. The sum of the weights of matched vertices in $U_i$, $\sum_{u \in U_i}b_u = B_i$.
\end{itemize}
Summing over all $i$, the weight of the matching formed is equal to the total revenue of the budgeted allocation. Let $\mathrm{OPT}_A$ and $\mathrm{OPT}_M$ denote the values of the optimal solutions of the budgeted allocation and the vertex-weighted matching problems respectively. Then we conclude from the above discussion that: 
\begin{equation}
\label{eq19}
\mathrm{OPT}_M\ \geq \mathrm{OPT}_A
\end{equation}

\noindent (2) Given a solution to the vertex-weighted matching problem where a set $R \subseteq U$ of vertices is matched, let us see how to construct a solution to the budgeted allocation problem with at least the same total value. Let $R_i = R\cap U_i$. For every $v \in V$ that is matched to a vertex in $R_i$, we will allocate the corresponding item to agent $i$. Let $S_i$ be the set of items allocated to agent $i$. 
\begin{itemize}
\item If $|R_i| = |S_i| \leq n_i$, then agent $i$ pays a total of $|S_i|\cdot b_i$ and we have: $$\sum_{u \in R_i}b_u\ \leq \ |R_i|\cdot b_i\ =\ |S_i|\cdot b_i$$ 
\item If on the other hand, $|R_i| = |S_i| = n_i+1$ then agent $i$ pays a total of $B_i$ and we have: $$\sum_{u \in R_i}b_u\ =\ \sum_{u \in U_i}b_u\ =\ B_i$$
\end{itemize}
Summing over all $i$, the total revenue of the budgeted allocation is at least the weight of the matching. Let $\mathrm{ALG}_M$ be the expected weight of the vertex-weighted matching constructed by \pgreedy~and $\mathrm{ALG}_A$ be the expected value of the budgeted allocation constructed using the above scheme. From the above discussion, we conclude:
Therefore, 
\begin{eqnarray}
\nonumber \mathrm{ALG}_A & \geq & \mathrm{ALG}_M\\
\label{eq20} & \geq & \left(1-\frac{1}{e}\right)\mathrm{OPT}_M\\
\nonumber & \geq & \left(1-\frac{1}{e}\right)\mathrm{OPT}_A
\end{eqnarray} 

Here, equation \eqref{eq20} follows from the main result - Theorem \ref{thm:main} - and the last step uses equation \eqref{eq19}. This completes our proof.

\end{proof}

\section{Performance of {\sc Greedy} and {\sc Ranking}}
\label{app3}
With non-equal weights, it is clearly preferable to match vertices with larger weight. This
leads to the following natural algorithm.\\

\begin{algorithm}[H]
\caption{{\sc Greedy}}
\ForEach{arriving $v \in V$}
{
	Match $v$ to the unmatched neighbor in $u$ which maximizes $b_u$ (breaking ties arbitrarily)\;
}
\end{algorithm}

It is not hard to show that {\sc Greedy} achieves a competitive ratio
of at least $\frac{1}{2}$. 

\begin{lemma}
{\sc Greedy} achieves a competitive ratio of 1/2 in vertex-weighted online bipartite matching.
\end{lemma}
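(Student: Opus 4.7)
The plan is a standard charging argument comparing \greedy's output $M$ against the fixed optimal offline matching $M^* = M^*(G)$. The total optimum can be split as $\mathrm{OPT}(G) = \sum_{u \in M \cap M^*} b_u + \sum_{u \in M^* \setminus M} b_u$. The first sum is trivially at most $\mathrm{ALG}(G,\pi) = \sum_{u \in M} b_u$, so the entire work consists in bounding the second sum, which represents the ``missed'' optimum contributions, by $\mathrm{ALG}(G,\pi)$ as well. Summing gives $\mathrm{OPT}(G) \leq 2\cdot \mathrm{ALG}(G,\pi)$, which is what we want.

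The charging is set up as follows. For each $u \in M^* \setminus M$, let $v = u^*$ be its partner in $M^*$. Since $u$ is never matched by \greedy, in particular $u$ was still unmatched at the moment $v$ arrived online. Hence $v$ had at least one unmatched neighbor at its arrival, so \greedy\ did match $v$ to some $u' \in U$; define $\phi(u) := u'$. By the greedy selection rule (match to the unmatched neighbor of maximum weight) we immediately get $b_{\phi(u)} \geq b_u$, since $u$ was among the eligible neighbors considered.

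It remains to argue that $\phi : M^* \setminus M \to M$ is injective. Suppose $\phi(u_1) = \phi(u_2) = u'$. Then \greedy\ matches $u'$ to both $u_1^*$ and $u_2^*$, but a single vertex of $U$ is matched at most once by \greedy, forcing $u_1^* = u_2^*$; since $M^*$ is a matching, this forces $u_1 = u_2$. Injectivity gives
\[
\sum_{u \in M^* \setminus M} b_u \;\leq\; \sum_{u \in M^* \setminus M} b_{\phi(u)} \;\leq\; \sum_{u' \in M} b_{u'} \;=\; \mathrm{ALG}(G,\pi),
\]
and combining with $\sum_{u \in M \cap M^*} b_u \leq \mathrm{ALG}(G,\pi)$ yields the claim.

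There is no real obstacle here; the only point that deserves a careful sentence is the observation that \greedy\ is guaranteed to match $u^*$ whenever $u \notin M$, which hinges on the fact that an unmatched $u$ remains a valid option throughout the online process. Everything else is bookkeeping. The $1/2$ factor in the bound is tight, as witnessed by a standard two-vertex example (which the paper can reference separately if desired).
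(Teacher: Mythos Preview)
Your proof is correct and follows essentially the same charging argument as the paper: for each $u$ matched in $M^*$ but not by \greedy, charge $b_u$ to the (heavier) vertex $u'$ that \greedy\ matched to $u^*$, and observe that this charge map is injective. The only cosmetic difference is that you explicitly split $\mathrm{OPT}$ into $M\cap M^*$ and $M^*\setminus M$ before bounding, whereas the paper phrases the same bound directly as ``loss $\leq$ algorithm's value.''
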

\begin{proof}
Consider an optimal offline matching, and a vertex $u\in U$ that is
matched in the optimal offline matching but not in the greedy
algorithm. Now look at a vertex $u^*\in V$ that is matched to the
vertex $u$ in the optimal matching.  In \greedy, $u^*$ must have been
matched to a vertex $u^{'} \in U$, s.t. $b_u \leq b_{u^{'}}$, since $u$ was
unmatched when $u^*$ was being matched. So we'll charge the loss of
$b_u$ to $u^{'}$. Note that each $u^{'}$ does not get charged more than
once -- it is charged only by the optimal partner of its partner in
the algorithm's matching. Thus the loss of the algorithm is no more
than the value of the matching output by the algorithm. Hence the
claim.
\end{proof}

In fact, this factor $1/2$ is tight for {\sc Greedy} as shown by an instance consisting
  of many copies of the following gadget on four vertices, with $u_1,
  u_2 \in U$ and $v_1, v_2 \in V$. As $\epsilon \rightarrow 0$, the
  competitive ratio of {\sc Greedy} tends to $\frac{1}{2}$.

\begin{center}
\includegraphics[height=3cm]{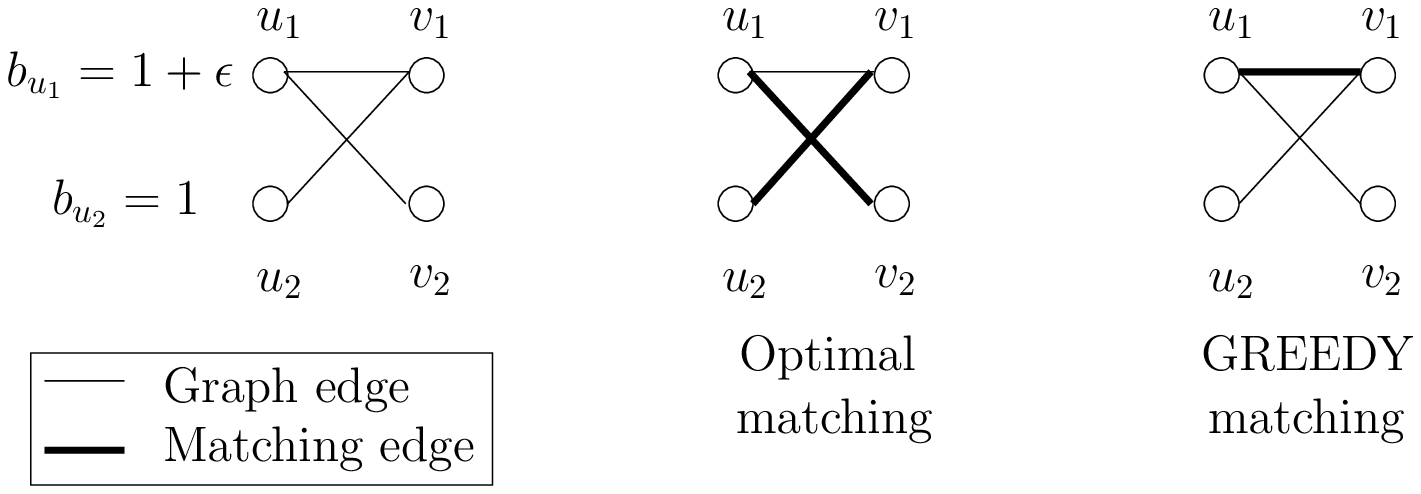}
\end{center}

Notice that this counter-example relies on weights being roughly
equal. We, however, know that {\sc Ranking} has an expected
competitive ratio of $(1-1/e)$ when the weights are equal. On the
other hand, if the weights are very different, i.e. $\epsilon$ is
large, in the above example, then {\sc Greedy} provides a good
competitive ratio. At the same time, if we exchanged the weights on
the two vertices in the example to be $b_{u_1} = 1$ and $b_{u_2} =
1+\epsilon$, then as $\epsilon$ grows large, the expected competitive
ratio of {\sc Ranking} drops to $\frac{1}{2}$ and on larger examples,
it can be as low as $\frac{1}{n}$. To summarize, {\sc Greedy} tends to
perform well when the weights are highly skewed and {\sc Ranking}
performs well when the weights are roughly equal.

\section{Intuition Behind the Sufficiency of Independent Perturbations}
\label{app6}

Recall that our algorithm perturbs each weight $b_u$ independent of the other weights. The fact that \pgreedy~achieves the best possible competitive ratio is a post-facto proof that such independence in perturbations is sufficient. Without the knowledge of our algorithm, one could reasonably believe that the vector of vertex-weights $\{b_u\}_{u\in U}$ - which is known offline - contains valuable information which can be exploited. In what follows we provide intuition as to why this is not the case.

Consider the two input instances in Figure \ref{fig:fig4}. Both the connected components in $G_1$ have equal weights, and hence we know that \ranking~achieves the best possible competitive ratio on $G_1$. Similarly, both connected components in $G_2$ have highly skewed weights, suggesting \greedy~as the optimal algorithm. On the other hand, \ranking~and \greedy~are far from optimal on $G_2$ and $G_1$ respectively. Since two instances with identical values of vertex-weights require widely differing strategies, this exercise suggests that we may not be losing must information by perturbing weights independently. The optimality of our algorithm proves this suggestion.

\begin{figure}[h]
\begin{center}
\includegraphics[height=6cm]{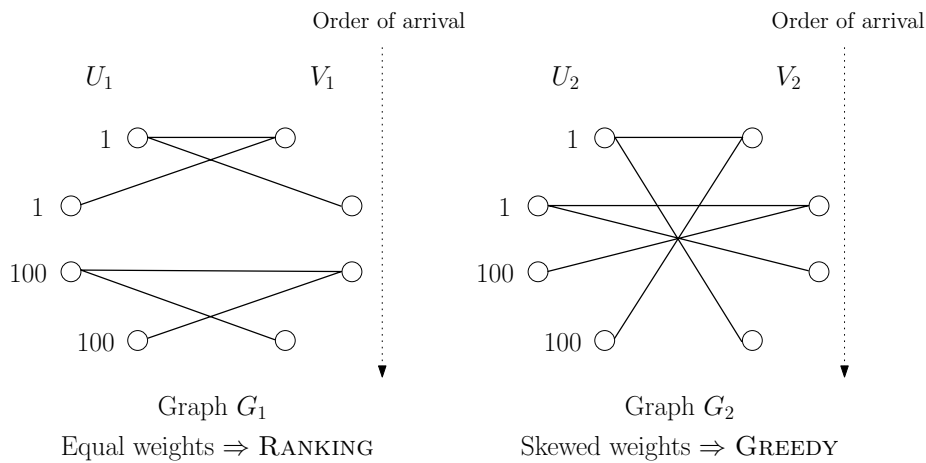}
\end{center}
\caption{\label{fig:fig4}Two instances with the same vertex-weights, but widely differing optimal strategies.}
\end{figure}

\section{Hard Instances in $2\times 2$ Graphs}
\label{app9}

Figure \ref{fig:twobytwo} shows the only two potentially `hard' instances in $2\times 2$ graphs. On all other instances, the optimal matching is found by any reasonable algorithm that leaves a vertex $v \in V$ unmatched only if all its neighbors are already matched.

\begin{figure}[h]
\begin{center}
\includegraphics[height=2cm]{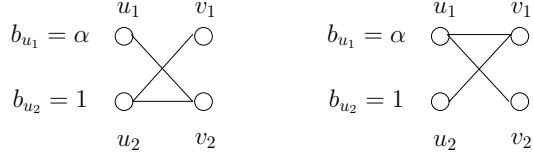}
\end{center}
\caption{\label{fig:twobytwo}Canonical examples for 2$\times$2 graphs.}
\end{figure}

\newpage

\section{Marginal Loss Events}
\label{app8}

\begin{figure}[h]
\begin{center}
\includegraphics[height=3.5cm]{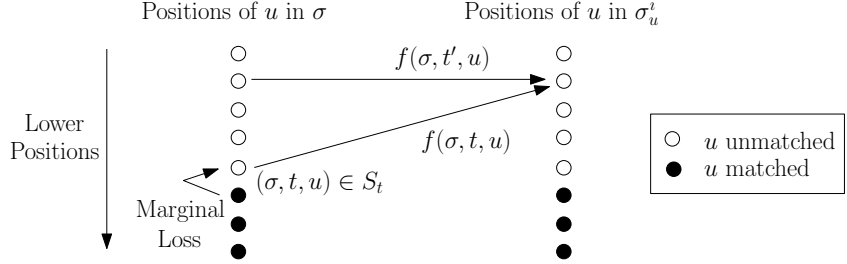}
\end{center}
\caption{\label{fig:margins}Marginal Losses}
\end{figure}

\section{Graphs with Imperfect Matchings}
\label{app1}
In Section \ref{sec:main-proof}, we proved Theorem \ref{thm:main} for graphs $G(U, V, E)$ such that $|U| = |V|$ and $G$ has a perfect matching. We can remove these assumptions with just a few modifications to the definitions and equations involved in the proof. The algorithm remains unchanged, \textit{i.e.} we just use \pgreedy.  We will only outline these modifications and the rest of the proof follows easily. Let $M^*(G)$ be a maximum weight matching in $G(U, V, E)$ and $\overline{U}$ be the set of vertices in $U$ matched by $M^*(G)$. Thus we know that $\mathrm{OPT}(G) = \sum_{u \in \overline{U}}b_u$.

Keeping the definition of $Q_t$ the same, we change the definition of $R_t$ to:
$$R_t = \{(\sigma, t, u)\ :\ \mbox{$u\in \overline{U}$ and $\sigma(u) = t$ and $y_{\sigma,t} = 0$}\}$$

The above redefinition conveys the fact that if a vertex $u$ is \emph{not} matched by $M^*(G)$, then we no longer consider $u$ being unmatched a bad event. Consequently, equation \eqref{eq8} changes to:

$$B-x_t \leq \frac{\sum_{(\sigma, t, u)\in R_t}{b_u}}{k^n}$$ which in turn yields following counterpart of equation \eqref{eq7}:

\begin{equation}
\label{eq16}
\forall t,\ \ \ \ \ \ x_t\ \geq\ B - \sum_{s\leq t}\alpha_s
\end{equation}

Let $\mathrm{Eq}(t)$ be the version of \eqref{eq16} for $t$. We then multiply $\mathrm{Eq}(t)$ by $\psi(t) - \psi(t+1)$ and sum over $1 \leq t \leq n$ to obtain a combined inequality (with $\psi(k+1) = 0$):

\begin{eqnarray}
\nonumber \sum_t{\left(\psi(t)\ -\ \psi(t+1)\right)x_t} & \geq & \psi(1)B\ -\  \sum_t{\psi(t)\alpha_t}\\
\label{eq17} \sum_t{\psi(t)\alpha_t} & \geq & \psi(1)\frac{\mathrm{OPT}(G)}{k} \ -\ \sum_t{\frac{\left(1-\psi(t+1)\right)}{k}x_t}
\end{eqnarray}

Equation \eqref{eq17} used the definition of $\psi(t) = 1-\left(1-\frac{1}{k}\right)^{(k-t+1)}$. Combining equation \eqref{eq17} with \eqref{eq14}, we get:

\begin{eqnarray}
\nonumber \frac{1}{k}\sum_t{\psi(t)x_t} & \geq & \psi(1)\frac{\mathrm{OPT}(G)}{k} - \sum_t{\frac{\left(1-\psi(t+1)\right)}{k}x_t}\\
\nonumber \sum_t{x_t} & \geq & \psi(1)\mathrm{OPT}(G) - \sum_t{\left(\psi(t)-\psi(t+1)\right)x_t}\\
\nonumber & \geq & \left(1-\frac{1}{e}\right)\mathrm{OPT}(G)
\end{eqnarray}

as $k \rightarrow \infty$, since $\psi(1) \rightarrow \left(1-\frac{1}{e}\right)$ and $\psi(t) - \psi(t+1) = \frac{(1-\psi(t+1)}{k} \rightarrow 0$ as $k \rightarrow \infty$.

\section{A Lower Bound for Randomized Algorithms with Edge Weights}
\label{app4}

In this section, we will sketch the proof of a lower bound for the competitive ratio of a randomized algorithm, when the graph $G(U,V,E)$ has edge weights and our objective is to find a matching in $G$ with maximum total weight of edges. Previous studies of this problem have only mentioned that no constant factor can be achieved when the vertices in $V$ arrive in an online manner. However, we have not been able to find a proof of this lower bound for randomized algorithms in any literature. We prove the result when the algorithm is restricted to be scale-free. A scale-free algorithm in this context produces the exact same matching when all the edge weights are multiplied by the same factor.

Consider a graph $G(U, V, E)$ such that $U$ contains just one vertex $u$ and each vertex in $v \in V$ has an edge to $u$ of weight $b_v$. Fix $v_1, v_2, ...$ to be the order in which the vertices of $V$ arrive online. By Yao's principle, it suffices for us to produce a probability distribution over $b_{v_1}, b_{v_2}, ...$ such that no deterministic algorithm can perform well in expectation. We will denote the vector of edge weights in the same order in which the corresponding vertices in $V$ arrive, \textit{i.e.} $(b_{v_1}, b_{v_2}, ...)$ and so on. Consider the following $n$ vectors of edge weights: For every $1 \leq i \leq n$, $\mathbf{b}_i = (D^i, D^{i+1}, ..., D^n, 0, 0, ...)$ and so on, where $D$ is a sufficiently large number. Suppose our input distribution chooses each one of these $n$ vectors of edge weights with equal probability.

Clearly, regardless of the vector which is chosen, $\mathrm{OPT}(G) = D^n$. Since an algorithm is assumed to be scale-free and online, it makes the exact same decisions after the arrival of first $k$ vertices for each of the edge weight vectors $\mathbf{b}_j$, $1 \leq j \leq k$. Therefore, it cannot distinguish between $\mathbf{b}_1, ..., \mathbf{b}_k$ after just $k$ steps. Hence, we can characterize any algorithm by the unique $k$ such that it matches the $k$'th vertex in $V$ with a positive weight edge.

Let $\mathrm{ALG}$ be any deterministic algorithm that matches the $k$'th incoming vertex with a positive weight edge to $u$. Then the expected weight of the edge chosen by $\mathrm{ALG}$ is $\displaystyle\frac{1}{n}\sum_{i> k}{D^i}$. Since $D$ is large, this is at most $\frac{c}{n}\mathrm{OPT}(G)$, where $c$ is some constant. Applying Yao's principle, we conclude that the competitive ratio of the best scale-free randomized algorithm for online bipartite matching with edge weights is $O\left(\frac{1}{n}\right)$. 

\end{document}